 \newtheorem{thm}{Theorem}[section]
 \newtheorem{lem}[thm]{Lemma}
  \newtheorem{rem}[thm]{Remark}
 \newtheorem{prop}[thm]{Proposition}
 \newtheorem{cor}[thm]{Corollary}
 \theoremstyle{definition}
\newcommand{\R}{\mathbb {R}}
\newcommand{\C}{\mathbb {C}}
\newcommand {\mat}      [1] {\left[\begin{array}{#1}}
\newcommand {\rix}          {\end{array}\right]}
\newcommand {\de}      [1] {\left|\begin{array}{#1}}
\newcommand {\nt}          {\end{array}\right|}
\begin{document}
\begin{titlepage}
 \title{Flux ratios for  effects of permanent charges  on ionic flows with three ion species: Case study (II)}
\author{Ning Sun\footnote{College of Mathematics, Jilin University, 2699 Qianjin Street, Changchun, Jilin 130012, P. R. China ({\tt sunning16@mails.jlu.edu.cn}).}\; and  Weishi Liu\footnote{Department of Mathematics, University of Kansas, 
1460 Jayhawk Blvd., Room 405,
Lawrence, Kansas 66045, USA ({\tt wsliu@ku.edu}).  
}}  
\date{}
\end{titlepage}
 
\maketitle

\begin{abstract}
In this paper, we  study effects of permanent charges on ion flows through membrane channels
via a quasi-one-dimensional classical Poisson-Nernst-Planck system. 
 This system includes three ion species, two cations with different valences and one anion,
and permanent charges with a simple structure, 
zeros at the two end regions and a constant over the middle region.
For small  permanent charges,
our main goal is to analyze the effects of permanent charges on ionic flows,
interacting with the boundary conditions and channel structure. Continuing from a previous work, we investigate the problem for a new case toward a more comprehensive understanding about effects of permanent charges on ionic fluxes.

 \end{abstract}
 
\section{Introduction}
\setcounter{equation}{0}
Electrodiffusion  exhibits rich phenomena and plays a central role for many applications
(\cite{Bez00, Eis90, Eis03, Eis12, IBR, IR1,KBA}).
Ionic flow through ion channels is one of critical topics of physiology. 
Ion channels are large proteins embedded in cell membranes 
that provide pathways for electrodiffusion of ions
(mainly Na$^+$, K$^+$, Ca$^{2+}$ and Cl$^-$) between inside and outside of cells
(\cite{Hille, H51, HH52a, HHK49, HK49,SN95, Ussing2}).
Thus, ion channels permit permeation and selectivity,
and produce electric signals for cells to communicate with each other.

Ion channels are defined by their structural characteristics,
channel shapes and permanent charge distributions,
which are responsible for biological functions of ion channels.
The shape of a typical channel could be approximated as a cylindrical-like domain,
with a non-uniform cross-section area.
Within an ion channel, amino acid side chains are distributed mainly over a
relatively “short” and “narrow” portion of the channel, 
with acidic side chains contributing negative charges and basic side chains contributing positive charges.
Permanent charges play a major role for controlling ionic flow properties, interacting 
with other important physical parameters, such as boundary concentrations, 
boundary potential and ion valences of ionic mixtures
(\cite{AEL,  Eisenb03, EHL10,EL07, JEL19,Liu05, Liu09, Rub90, SN09, SGNE,WZCX, ZEL19,ZL20}).

Due to the limitation of present experimental techniques,
the most basic functions of ion channels such as permeation and selectivity
are mainly extracted from the I-V relation measured experimentally (see, e.g., \cite{BCEJ97,BCB05,CE93}).
The I-V relation defines a functional response of the channel structure and boundary conditions (see display (\ref{IVrelation}))
and is an input-output type information of an average effect of the full dynamics of ionic flows.
Individual ionic fluxes carry more information than the I-V relation,
but it is expensive and challenging to measure them (\cite{HK55,JEL19, Ussing}).
A point is that it is still not possible to ``measure/observe" internal dynamical behaviors of ionic flows,
which makes it difficult to understand ion channel properties from experimental data
due to extremely rich phenomena that can be created by the multi-scale feature and the nonlinear interplay among those physical parameters.

Mathematical modelings, analyses and numerical simulations of ion channel problems provide an alternative  and complementary approach for
explaining observed biological phenomena  and discovering new ones.
Poisson-Nernst-Planck (PNP) systems serve as basic primitive models for ionic flows through ion channels (see, e.g.,\cite{Bar92, BCE92,EHL10,GNE,HEL10,IBR,IR1,KBA,LE20, LXE17, Rub90,WZCX}).
There have been a great deal successes in analyzing  PNP models (see, e.g., \cite{AEL, Bar92, BCE92, BCEJ97,  ELX15, HLY21,HEL10,  LW10,MEL20, ML20, PJ97, SL22, WHWH14, ZEL19, ZL20}), particularly, those by geometric singular perturbation (GSP) theory (see, e.g.,  \cite{CWZZ21, EL07, ELX15, JEL19, JL12, LTZ12, JLZ15,  LLYZ, Liu05, Liu09, Liu18,LX15, MEL20, ML20,SL18,YXL22,ZEL19, ZL20}),
which makes it possible to explain some effects of different parameters on the physical properties.

For ionic flows involving two ion species (one cation and one anion), effects of permanent charges 
 have been extensively examined and important phenomena, some counterintuitive, were revealed (\cite{HLY21, JLZ15, Liu18, ZEL19,ZL20}). In terms of flux ratios introduced in \cite{Liu18}, major findings for flows of two ion species are as follows. Depending on boundary conditions, {\em a positive permanent charge} can enhance the anion flux while inhibiting that of cation, can enhance the fluxes of both anion and cation,  can inhibit the fluxes of both anion and cation, but, cannot enhance the cation flux while inhibiting that of cation; and, independent of boundary conditions,   {\em  a positive permanent charge} always helps the flux of  anion more than that of cation.

Recently there are several works on ionic flows with three ion species, 
two cation species with the {\em same valence} and one anion species, 
and some interesting results are obtained on competition of the two cations 
(see, e.g., \cite{BWZ21, WZZ21, Zhang21}). These results on competition of the two cations are obtained with measurements that are different from flux ratios. 
However, for three ion species of different valences, the analysis on PNP system is very limited. 
In \cite{SL22}, the authors started  to examine effects of small permanent charges on ionic flows involving three ion species (two cations with different valences and one anion).  They treated a case with equal chemical potential difference and
discovered several new phenomena,  particularly, in sharp contrast to two ion species cases, {\em a positive permanent charge can enhance the flux of {\em either} one of the cation species more than the anion flux.} 
  A $3\times 3$ matrix $D_0$ (see Section \ref{KVtheta} below), determined by the fluxes associated to zero permanent charge, plays a crucial role for the study in \cite{SL22}. The matrix $D_0$ always has a zero eigenvalue. The case studied in \cite{SL22} concerns the situation when the other two eigenvalues $\sigma_{10}$ and $\sigma_{20}$ of $D_0$ are real.     We consider, in this work, the case when $\sigma_{10}$ and $\sigma_{20}$ are a pair of complex conjugate eigenvalues and establish several abstract results -- counterparts of those provided in \cite{SL22}.  In order to draw concrete results, we further  conduct a detailed study when $\sigma_{10}$ and $\sigma_{20}$ are a pair of pure imaginary eigenvalues. 
In addition to some results that are consistent with those for the case studied in \cite{SL22},  our results show that, in this new case,   {\em small positive  permanent charges can enhance {\em only}
 the flux of the cation species with {\em the smaller valence} more than that of the anion}.

The  paper is organized as follows. In the rest of this introduction, we recall the quasi-one-dimensional PNP model for
ionic flows and the flux ratio for permanent charge effects on individual fluxes,
and present the setup of our study.
In Section \ref{reviews}, we briefly review the geometric singular perturbation theory
developed for PNP models in \cite{Liu09, LX15} and some relevant results from \cite{SL22}.
In Section \ref{SS2} contains the new case study on the flux ratios for permanent charge effects.
We conclude the paper with a brief summary in Section \ref{summary}.

\subsection{A quasi-one-dimensional PNP model for ion transports}
For a mixture of $n$ ion species, a quasi-one-dimensional PNP model (\cite{LW10,NE98}) is
\begin{align}
\begin{aligned}\label{E0}
&\frac{1}{A(X)} \frac{{\rm d}}{{\rm d} X}
 \bigg(\varepsilon_{r}(X) \varepsilon_{0} A(X) \frac{{\rm d} \Phi}{{\rm d} X}\bigg)
 =-e_{0}\bigg(\sum_{s=1}^{n}z_{s}C_{s}+\mathcal{Q}(X)\bigg) \\
&\frac{{\rm d} \mathcal{J}_{k}}{{\rm d} X}=0,~~
 -\mathcal{J}_{k}
 =\frac{1}{k_{B} T} \mathcal{D}_{k}(X) A(X) C_{k} 
 \frac{{\rm d} \mu_{k}}{{\rm d} X},~~ 
 k=1,2, \cdots, n,
\end{aligned}
\end{align}
where 
$X \in[0, l]$ is the coordinate along the longitudinal axis of the channel, 
$A(X)$ is the area of cross-section of the channel at the location $X$,
$\varepsilon_{r}(X)$ is the relative dielectric coefficient, $\varepsilon_{0}$ is the vacuum permittivity, 
$e_{0}$ is the elementary charge,  $\mathcal{Q}(X)$ is the permanent charge density, 
$k_{B}$ is the Boltzmann constant,  $T$ is the absolute temperature;
$\Phi$ is the electric potential, and, for the $k$-th ion species, 
$z_{k}$ is the valence (the number of charges per particle),  $C_{k}$ is the concentration, 
$\mathcal{J}_{k}(X)$ is the flux density through the cross-section over $X$, 
$\mathcal{D}_{k}$ is the diffusion coefficient,  and 
$\mu_{k}$ is the electrochemical potential depending on $\Phi$ and $C_{k}$.

Equipped with system $(1.1)$, 
a meaningful boundary condition for ionic flows through ion channels 
is, for $k=1,2, \cdots, n$,
\begin{align}\label{BVC0}
\Phi(0)=\mathcal{V},~~ C_{k}(0)=\mathcal{L}_{k}>0;~~ 
\Phi(l)=0,~~ C_{k}(l)=\mathcal{R}_{k}>0.
\end{align}
We are interested in study the boundary value problem (BVP) (\ref{E0}) and (\ref{BVC0}) for understanding effects of permanent charges ${\cal Q}$ on ionic fluxes ${\cal J}_k$'s.

As in many mathematical analyses of the BVP (\ref{E0}) and (\ref{BVC0}) (see, e.g. \cite{HLY21,Liu18,SL22,ZEL19,ZL20}), 
we will work with  a dimensionless form. 
Let $C_{0}$ be a characteristic concentration of the problems, for example,
$$
C_{0}
=\max_{1 \leq k \leq n}\big\{\mathcal{L}_{k},\mathcal{R}_{k},\sup_{X \in[0,l]}|\mathcal{Q}(X)|\big\}.
$$
Set
$$
\mathcal{D}_{0}
=\max_{1 \leq k \leq n}\big\{\sup_{X\in[0, l]} \mathcal{D}_{k}(X)\big\} 
~\text { and }~
\hat{\varepsilon}_{r}=\sup_{X \in[0, l]} \varepsilon_{r}(X).
$$
Let
\begin{align*}
\begin{aligned} 
& 
x=
\frac{X}{l},~~ 
h(x)
=\frac{A(X)}{l^{2}},~~ 
\bar{\varepsilon}_{r}(x)
=\frac{\varepsilon_{r}(X)}{\hat{\varepsilon}_{r}},~~ 
\varepsilon^{2}
=\frac{\bar{\varepsilon}_{r} \varepsilon_{0} k_{B} T}{e_{0}^{2} l^{2} C_{0}},~~ 
\phi(x)
=\frac{e_{0}}{k_{B} T} \Phi(X),\\ 
& 
c_{k}(x)
=\frac{C_{k}(X)}{C_{0}},~~ 
Q(x)
=\frac{\mathcal{Q}(X)}{C_{0}},~~  
\bar{J}_{k}
=\frac{\mathcal{J}_{k}}{l C_{0} \mathcal{D}_{0}},~~
D_{k}(x)
=\frac{\mathcal{D}_{k}(X)}{\mathcal{D}_{0}},~~
\bar{\mu}_{k}
=\frac{1}{k_{B} T} \mu_{k}.
\end{aligned}
\end{align*}

In terms of the new variables, the BVP (\ref{E0}) and (\ref{BVC0}) becomes
\begin{align}
\begin{aligned}\label{PNPE1}
&
\frac{\varepsilon^{2}}{h(x)}
\frac{{\rm d}}{{\rm d}x}\bigg(\bar{\varepsilon}_{r}(x)h(x) \frac{d\phi}{{\rm d} x}\bigg)
=-\sum_{s=1}^{n}z_{s}c_{s}-Q(x),\\
&
\frac{{\rm d} \bar{J}_{k}}{{\rm d} x}=0,~~
-\bar{J}_{k}
=h(x)D_{k}(x)c_{k}\frac{{\rm d} \bar{\mu}_{k}}{{\rm d} x},~~k=1,2,\ldots,n,
\end{aligned}
\end{align}
with the boundary conditions at $x=0$ and $x=1$
\begin{align}\label{PNPBV1}
\phi(0)=V=\frac{e_{0}}{k_{B} T} \mathcal{V},~~ 
c_{k}(0)=l_{k}=\frac{\mathcal{L}_{k}}{C_{0}};~~ 
\phi(1)=0,~~ 
c_{k}(1)=r_{k}=\frac{\mathcal{R}_{k}}{C_{0}}.
\end{align}

\begin{rem}\label{swapsym} 
One has the following symmetry of the boundary value problem:
If $(\phi(x), c_k(x),\bar{J}_k)$ is a solution of (\ref{PNPE1}) and (\ref{PNPBV1}), then
\[(\phi^*(x), c^*_k(x),\bar{J}^*_k)=(\phi(1-x),c_k(1-x),-\bar{J}_k)\]
is a solution of (\ref{E0}) with the boundary conditions
\[\phi(0)=-V,~~ c_{k}(0)=r_{k};~~ \phi(1)=0,~~ c_{k}(1)=l_{k}.\]
\end{rem}

For boundary conditions, to avoid sharp boundary layers (\cite{ZEL19, ZL20}),
one often designs boundary conditions to meet the electroneutrality condition 
\begin{align}\label{ENC}
\sum_{s=1}^{n}z_{s}l_{s}=\sum_{s=1}^{n}z_{s}r_{s}=0.
\end{align}

The electrochemical potential 
$\bar{\mu}_{k}(x)=\bar{\mu}_{k}^{id}(x)+\bar{\mu}_{k}^{ex}(x)$ 
for the $k$ th ion species consists of 
the ideal component $\bar{\mu}_{k}^{id}(x)$ given by
\begin{align}\label{muid}
\bar{\mu}_{k}^{id}(x)=z_{k} \phi(x)+\ln c_{k}(x),
\end{align}
and the excess component $\bar{\mu}_{k}^{e x}(x)$.
The {\em classical} PNP 
 model only deals with the ideal component $\bar{\mu}_{k}^{id}$, 
reflecting the collision between ion particles 
and water molecules and ignoring the size of ions. 
The excess electrochemical potential $\bar{\mu}_{k}^{ex}$ 
accounts for the finite size effect of ions
(\cite{Bik42, GNE, GNE1, GNE2, HEL10, JL12, Li09,  LLYZ,LTZ12, Ros89, Ros93, SLBE02, SL18,TR97}).

An important quantity for characterizing ion channel properties is 
the so-called I-V (current-voltage) relation defined as follows. 
For fixed $l_{k}$'s and $r_{k}$'s, 
a solution $(\phi, c_{k}, \bar{J}_{k})$ of the BVP (\ref{PNPE1}) and (\ref{PNPBV1}) 
will depend on the voltage $V$ only, 
and the current $\mathcal{I}$, {\em the~flow~rate~of~charges}, 
is thus related to the voltage $V$ given by
\begin{align}\label{IVrelation}
\mathcal{I}=\sum_{s=1}^{n}z_{s}\bar{J}_{s}(V).
\end{align}

\subsection{Flux ratios for permanent charge effects on ionic fluxes}

Recall the concept of flux ratio for permanent charge effects on ionic fluxes introduced in \cite{Liu18}.
  For fixed boundary conditions $(V,L,R)$  where $L=(l_{1}, l_{2}, \ldots, l_{n})^{T}$ and $R=(r_{1}, r_{2}, \ldots, r_{n})^{T}$,
let $\bar{J}_k(Q)$ be the flux of the $k$-th ion species associated with the permanent charge $Q$,
then the {\em flux ratio} for the $k$th ion species is 
\begin{align}\label{fluxratio}
\lambda_k(Q)=\frac{\bar{J}_k(Q)}{\bar{J}_k(0)}.
\end{align}

Since the boundary conditions are fixed, 
$\bar{J}_k(Q)$ and $\bar{J}_k(0)$ have the same sign as that of $\bar{\mu}_k(0)-\bar{\mu}_k(1)$ (see, e.g., \cite{ELX15, Liu18}),
and hence $\lambda_k(Q)\ge 0$. Therefore, the permanent charge $Q$ 
{\em enhances} the flux of the $k$th ion species if $\lambda_k(Q)>1$ and 
it {\em inhibits} the flux of the $k$th ion species if $\lambda_k(Q)<1$.  
Regardless the relative positions of $\lambda_i(Q)$ and $\lambda_j(Q)$ to $1$,  
we say the permanent charge $Q$ {\em enhances} the flux of $i$th ion species more than that of $j$th ion species if $\lambda_i(Q)>\lambda_j(Q)$, even if, say, $1>\lambda_i(Q)>\lambda_j(Q)$.

From \cite{JLZ15,Liu18,SL22},  we know that for $n=2$ or $n=3$, 
{\em depending on the boundary conditions}, 
either $\lambda_k(Q)\ge 1$ or $\lambda_k(Q)<1$ may occur.
In particular,
it is known (\cite{JLZ15,Liu18}) that, 
\begin{align}\label{universal}
\mbox{For }\; n=2 \;\mbox{ with }\; z_1>0>z_2, \mbox{ if }\;Q> 0,\; \mbox{ then }\; \lambda_1(Q)<\lambda_2(Q).
\end{align}
We comment that $Q$ needs not to be piecewise constant and the property holds true for any given boundary conditions.
 On the other hand,
it was shown in \cite{SL22} for a case study that, for $n=3$ with $z_1>z_2>0>z_3$,
if $Q\ge 0$, then, 
{\em dependent on boundary conditions and channel geometry}, each of the following rather surprising  situations is  possible:
(i) $\lambda_2(Q)>\lambda_3(Q)$;  
(ii) $\lambda_1(Q)>\lambda_2(Q)$ and $\lambda_1(Q)>\lambda_3(Q)$ simultaneously; 
(iii) $\lambda_1(Q)+\lambda_2(Q)>2\lambda_3(Q)$.

\subsection{Setup of our case study}\label{SetupSect}
In this paper, we continue the study on flux ratios started in \cite{SL22} to examine 
the effects of permanent charges on individual fluxes for {\em three} ion species. 

We now recall the basic setup from \cite{SL22}. Assume
\begin{itemize}
\item[(A1)] 
Consider three ion species ($n=3$) with $z_{1}>z_{2}>0>z_{3}$;

\item[(A2)] 
A piecewise constant permanent charge $Q=Q(x)$ with one nonzero region; 
that is, for a partition
$0=x_{0}<x_{1}<x_{2}<x_{3}=1$ of $[0,1]$,
\begin{align}\label{pc}
Q(x)=
\bigg\{\begin{array}{ll}
Q_{1}=Q_{3}=0, & x \in(x_{0}, x_{1}) \cup(x_{2}, x_{3}), \\ 
Q_{2}, & x \in(x_{1}, x_{2}),
\end{array}
\end{align}
where $|Q_{2}|$ is a constant small relative to $l_{k}$'s and $r_{k}$'s;

\item[(A3)] 
The electrochemical potential is ideal,
that is $\bar{\mu}_{k}=\bar{\mu}_{k}^{id}$ given by (\ref{muid});

\item[(A4)] 
Assume that $\bar{\varepsilon}_{r}(x)=1$ 
and $D_{k}(x)=D_{k}$ for some positive constants $D_{k}$. 
\end{itemize}

In the following, we will assume $\varepsilon>0$ small and treat system (\ref{PNPE1}) as a singularly perturbed system.
Then, we apply the GSP framework from \cite{EL07,Liu09} to the BVP (\ref{PNPE1})  and (\ref{PNPBV1}).
The following quantities $\alpha_j$'s for the channel geometry are crucial
\begin{align*}
\alpha_j=\frac{H(x_{j})}{H(1)}\;\mbox{ where }\; x_j's \;\mbox{ are the jump points of}\; Q(x)\;\mbox{ and }\; H(x)=\int_0^x\frac{{\rm d}s}{h(s)}.
\end{align*}


 \section{Reviews of relevant results}\label{reviews}
  \setcounter{equation}{0}
  
  We will make two reviews that are directly relevant to the present work. The first is a brief review of the geometric singular perturbation  (GSP) framework for PNP models developed in \cite{EL07, Liu09, LX15}, etc.. 
  In particular, we will recall the meromorphic function whose roots are the starting point and several critical quantities for our study of the BVP. In the second part, we review relevant results from  \cite{SL22} on flux ratios with the setup in this paper for a case study. It is the new phenomena founded in \cite{SL22} that motivate the further case study in this work. Some of the general results in \cite{SL22} are also needed for the study in this paper. 
  
  \subsection{Relevant results from GSP for PNP} 
  We will give a brief and quick account of the GSP framework and relevant results in \cite{EL07,Liu09, LX15} (with slightly different notations) 
and refer the readers to these papers and references therein for details. 
 
 Denote the derivative with respect to $x$ by overdot 
and introduce 
$u=\varepsilon \dot{\phi}$, $w_{}=x$ and 
$J_{k}={\bar{J}_{k}}/{D_{k}}$. 
System (\ref{PNPE1}) becomes, for $k=1,2,\ldots,n$,
\begin{align}
\begin{aligned}\label{ES}
\varepsilon \dot{\phi}
&=u,~~
\varepsilon \dot{u}
=-\sum_{s=1}^{n}z_{s}c_{s}
-Q(w_{})-\varepsilon \frac{h_{w_{}}(w_{})}{h(w_{})} u, \\
\varepsilon \dot{c}_{k}
&=-z_{k} c_{k} u-\frac{\varepsilon}{h(w_{})} J_{k},~~
\dot{J}_{k}=0,~~ \dot{w_{}}=1.
\end{aligned}
\end{align}
The boundary condition (\ref{PNPBV1}) becomes, 
for $k=1,2, \ldots, n$,
\begin{align*}
\phi(0)
=V,~c_{k}(0)=l_{k},~w_{}(0)=0 ;~~ 
 \phi(1)=0,~c_{k}(1)=r_{k},~w_{}(1)=1.
\end{align*}

The boundary value problem (BVP) (\ref{PNPE1}) and (\ref{PNPBV1}) can be converted to a {\em connecting orbit problem:
finding an orbit of (\ref{ES}) from $B_{L}$ to $B_{R}$} where
\begin{align}
\begin{aligned}\label{BLR}
B_{L}&=\{(\phi, u, C, J, w_{}):~ \phi=V,~C=L,~w_{}=0\}, \\ 
B_{R}&=\{(\phi, u, C, J, w_{}):~ \phi=0,~C=R,~w_{}=1\},
\end{aligned}
\end{align}
with 
$C=(c_{1}, c_{2}, \ldots, c_{n})^{T}$,
$J=(J_{1}, J_{2}, \ldots, J_{n})^{T}$,
$L=(l_{1}, l_{2}, \ldots, l_{n})^{T}$ and 
$R=(r_{1}, r_{2}, \ldots, r_{n})^{T}$.

 Due to the jumps of $Q(x)$ in (\ref{pc}) at $x_1$ and $x_2$, it is convenient to  preassign (unknown) values of $\phi$ and $c_{k}$'s 
at each jump point $x_{j}$  as
\begin{align}\label{phijcj}
\phi(x_{j})=\phi^{[j]},~~ c_{k}(x_{j})=c_{k}^{[j]},~~j=1,2.
\end{align}
We comment that, once these quantities are determined, so is a singular orbit ($\varepsilon=0$) of the BVP (\ref{PNPE1}) and (\ref{PNPBV1}) (see \cite{EL07,Liu09}).

The GSP developed in \cite{EL07,Liu09,LX15}, etc. for PNP allows one to construct three singular orbits, one over each subinterval $[x_j,x_{j+1}]$ for $j=0,1,2$ in terms of the preassigned values in (\ref{phijcj}). Each of the three orbits contains two ``boundary layers" at the end points $x_{j}$ and $x_{j+1}$, and one regular layers over the interval $(x_j,x_{j+1})$. One then matches these three orbits at $x_1$ and $x_2$ to obtain a governing system for the preassigned values in (\ref{phijcj}).
In particular, one has 
\begin{prop}[Theorem 3.1, \cite{LX15}] \label{evalue2f} Let  \begin{align}\label{Df}
D=\Gamma-I^{-1} J b^{T}
\end{align}
 where $\Gamma=\mbox{diag}\{z_1,z_2,\cdots, z_n\}$ is the diagonal matrix with $z_k$'s on the diagonal, $I=\sum_{s=1}^nz_sJ_s$, and $b=(z_{1}^{2}, z_{2}^{2}, \ldots, z_{n}^{2})^{T}$. Let
 $\sigma_1,\sigma_2,\ldots,\sigma_p$  be the distinct eigenvalues of $D$    with algebraic multiplicities  $s_1,\ldots,s_p$ so that $s_1+s_2+\ldots+s_p=n$. (Note that $0$ is always an eigenvalue and we choose $\sigma_p=0$.) 
 Then, for $j=1,2,3$, the flux over the $j$th interval is $J_{k}^{[j]}=I^{[j]}f_{k}^{[j]}$ where
\begin{align}\label{4f}\begin{split}
I^{[j]}=& \frac{V_j}{H(x_j)-H(x_{j-1})} \int_0^1 b^Te^{V_jDz}C^{[j-1,+]}\, {\rm d}z,\\
 f_{k}^{[j]}=&\frac{1}{z_k^{2}}\frac{\prod_{i=1}^p(z_k-\sigma_i)^{s_i}}{ \prod_{1\le i\le n, i\neq k}(z_k-\sigma_i)}\quad \mbox{ for }\; k=1,2,\ldots, n.
 \end{split}
 \end{align}
 \end{prop}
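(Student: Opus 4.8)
The plan is to work entirely inside the GSP construction of \cite{EL07,Liu09,LX15}. On each subinterval $(x_{j-1},x_{j})$ the permanent charge $Q$ is constant, so I would realize the singular orbit as a concatenation of two fast ``layer'' pieces at the endpoints and one slow ``regular'' piece in the interior. The layer dynamics (the limit of (\ref{ES}) in the fast variable) only move $\phi$ and $u$ while leaving the $J_{k}$ fixed; their role is to produce the landing values $C^{[j-1,+]}$ and $\phi^{[j-1,+]}$ on the slow manifold, and they are treated exactly as in \cite{EL07,Liu09}. Because $\dot J_{k}=0$, each flux is constant along the interval, so the full content of the Proposition is the explicit solution of the \emph{reduced (slow) flow} on the slow manifold $\{u=0,\ \sum_{s}z_{s}c_{s}+Q=0\}$. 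I therefore take $C^{[j-1,+]}$ and $V_{j}=\phi^{[j-1,+]}-\phi^{[j,-]}$ as given matching data and compute the flux they force.

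The key step is to linearize the slow flow. On the slow manifold the leading-order Poisson equation gives electroneutrality $\sum_{s}z_{s}c_{s}=-Q$ (constant on the interval), while the flux relation reads $-J_{k}=h(x)\big(z_{k}c_{k}\dot\phi+\dot c_{k}\big)$ with $\dot{}=d/dx$. Multiplying this relation by $z_{k}$, summing over $k$, and using $\sum_{s}z_{s}\dot c_{s}=0$ eliminates the $\dot c_{k}$ and yields
\begin{align*}
\dot\phi=-\frac{I}{h(x)\,b^{T}C},\qquad I=\sum_{s}z_{s}J_{s}.
\end{align*}
Dividing the flux relation by $\dot\phi$ to replace $x$ by $\phi$ as independent variable then produces the \emph{linear} system
\begin{align*}
\frac{dC}{d\phi}=-\Gamma C+\frac1I J\,(b^{T}C)=-\big(\Gamma-I^{-1}Jb^{T}\big)C=-DC ,
\end{align*}
so that $C(\phi)=e^{-D(\phi-\phi^{[j-1,+]})}C^{[j-1,+]}$; this is precisely the origin of the matrix exponential $e^{V_{j}Dz}$ after rescaling $\phi$ onto $z\in[0,1]$.

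Next I would recover $I^{[j]}$ by writing the identity for $\dot\phi$ as $b^{T}C\,d\phi=-\tfrac{I}{h(x)}\,dx$ and integrating across the interval: the right side gives $-I\big(H(x_{j})-H(x_{j-1})\big)$, and substituting the exponential solution on the left with the change of variable $\phi^{[j-1,+]}-\phi=V_{j}z$ produces the stated formula for $I^{[j]}$. Finally, to obtain $f_{k}=J_{k}/I$ I would exploit the rank-one structure of $D$: by the matrix determinant lemma the characteristic polynomial factors as
\begin{align*}
P(\lambda)=\det(\lambda\,\mathrm{Id}-D)=\prod_{i=1}^{n}(\lambda-z_{i})+\frac1I\sum_{m}z_{m}^{2}J_{m}\prod_{i\neq m}(\lambda-z_{i}).
\end{align*}
Evaluating at $\lambda=z_{k}$ annihilates every term except $m=k$, giving $P(z_{k})=\tfrac1I z_{k}^{2}J_{k}\prod_{i\neq k}(z_{k}-z_{i})$; since also $P(\lambda)=\prod_{i=1}^{p}(\lambda-\sigma_{i})^{s_{i}}$, solving for $J_{k}/I$ recovers $f_{k}^{[j]}$, while $P(0)=0$ (using $\sum_{k}z_{k}J_{k}=I$) confirms that $\sigma_{p}=0$ is always an eigenvalue. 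I expect the main obstacle to be not these computations but the \emph{justification of the reduction}: one must check that $\phi$ is a legitimate independent variable on the regular layer (i.e. $\dot\phi\neq0$, equivalently $I\neq0$ and $b^{T}C\neq0$) and invoke Fenichel/exchange-lemma estimates to guarantee both the persistence of the singular orbit and that the layer jumps genuinely supply the landing data $C^{[j-1,+]}$; once these structural points are secured, the algebra producing $I^{[j]}$ and $f_{k}^{[j]}$ is routine.
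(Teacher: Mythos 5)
Your proof is correct and follows essentially the same route as the source: the paper does not actually prove Proposition \ref{evalue2f} but quotes it from \cite{LX15}, and the argument there rests on exactly the ingredients you use---electroneutrality on the slow manifold, taking $\phi$ as the independent variable to linearize the regular-layer flow into $\frac{{\rm d}C}{{\rm d}\phi}=-DC$, integrating $b^{T}C\,{\rm d}\phi=-I\,{\rm d}x/h(x)$ to obtain $I^{[j]}$, and the rank-one structure of $D$ in the characteristic polynomial to extract $f_{k}^{[j]}$. One point worth recording: your computation yields the denominator $\prod_{1\le i\le n,\,i\neq k}(z_{k}-z_{i})$ (valences, not eigenvalues), which is the correct formula of \cite{LX15}; the $\sigma_{i}$ appearing in the denominator of (\ref{4f}) as transcribed in the statement is a typo, since $\sigma_{i}$ is not even defined for $i>p$.
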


  For $n=3$,  the governing system for the preassigned unknowns
 $(\phi^{[j]},c_k^{[j]})$ for $j=1,2$ and $k=1,2,3$ in (\ref{phijcj}) is
\begin{align}\label{MC3}\begin{split}
  \;\mbox{ Matching at }\;  x_1: &\;  \sum_{s=1}^{3} c_{s}^{[1,-]}=\sum_{s=1}^{3} c_{s}^{[1,+]}+Q_{2}(\phi^{[1]}-\phi^{[1,+]}),\; J_{k}^{[1]}=J_{k}^{[2]},\\
  \;\mbox{ Matching at }\; x_2: &\; \sum_{s=1}^{3} c_{s}^{[2,+]}=\sum_{s=1}^{3} c_{s}^{[2,-]}+Q_{2}(\phi^{[2]}-\phi^{[2,-]}),\;J_{k}^{[2]}=J_{k}^{[3]},
\end{split}
\end{align}
where $J_{k}^{[j]}$'s are provided in Proposition \ref{evalue2f} and where,
  in terms of $(\phi^{[j]},c_k^{[j]})$ variables,  $\phi^{[j,+]}$ and $\phi^{[j,-]}$ are 
determined  by
\begin{align*}
\sum_{s=1}^{3} z_{s} c_{s}^{[j]} e^{z_{s}(\phi^{[j]}-\phi^{[j,+]})}+Q_{j+1}=0,~~
\sum_{s=1}^{3} z_{s} c_{s}^{[j]} e^{z_{s}(\phi^{[j]}-\phi^{[j,-]})}+Q_{j}=0,
\end{align*}
and
$c_{k}^{[j,+]}$ and $c_{k}^{[j,-]}$   are, in turn, given by
\begin{align*}
c_{k}^{[j,+]}=c_{k}^{[j]} e^{z_{k}(\phi^{[j]}-\phi^{[j,+]})},~~
c_{k}^{[j,-]}=c_{k}^{[j]} e^{z_{k}(\phi^{[j]}-\phi^{[j,-]})}.
\end{align*}

 \subsection{Relevant results for small $Q$ and $n=3$ from \cite{SL22} }  
 In \cite{SL22}, for three ion species with valences $z_1>z_2>0>z_3$ and for small  $|Q_2|$, a systematic investigation on comparative effects of permanent charge $Q$ was initiated and  a rich set of phenomena was revealed, much more beyond the case with two ion species done in \cite{JLZ15, Liu18}.
  Even though the study in \cite{SL22}  was focused on  special cases, many findings were quite counterintuitive. We will review the framework set in \cite{SL22} and some relevant results in this part.

  \subsubsection{Comparative effects of permanent charge $Q$ with small $|Q_2|$}\label{KVtheta}
   In \cite{SL22} and this paper, the main focus is to study the effects of 
small permanent charges on individual fluxes and we are mainly interested in properties based on $J_k$ up to $O(Q_2)$.

  For small $Q_2$, we expand the quantities $(\phi^{[j]}, c_{k}^{[j]})$ in (\ref{phijcj}) as follows
\begin{align}\label{expan}
\begin{split}
\phi^{[j]}
&=\phi^{[j]}_{0}+\phi^{[j]}_{1}Q_{2}+O(Q_{2}^{2}),\\
c_{k}^{[j]}
&=c_{k0}^{[j]}+c_{k1}^{[j]}Q_{2}+O(Q_{2}^{2}),\\
J_{k}
&=J_{k0}+J_{k1}Q_{2}+O(Q_{2}^{2}).
\end{split}
\end{align}

With expansions  in (\ref{expan}), the flux ratio $\lambda_k(Q)$ of the $k$-th ion species in (\ref{fluxratio}) and the flux ratio difference $\lambda_i(Q)-\lambda_j(Q)$ between the $i$-th and the $j$-th ion species are
\begin{align}\label{DiffFR}\begin{split}
&\lambda_k(Q)=\frac{J_{k}(Q)}{J_k(0)}=1+\tau_k Q_2+o(Q_2),\\
&\lambda_i(Q)-\lambda_j(Q)=\tau_{ij}Q_2+o(Q_2),
\end{split}
\end{align}
where, from (\ref{expan}),
\begin{align}\label{1stLambdas}
\tau_k=\frac{J_{k1}}{J_{k0}} \;\mbox{ and }\; \tau_{ij}=\tau_i-\tau_j.
\end{align}
Therefore, {\em comparative effects of small $Q$} on fluxes $J_i$ and $J_j$  are reduced to the study of signs of  
$\tau_{ij}$, 
which are determined by boundary conditions $(V,L,R)$ and channel geometry characters $(\alpha_1,\alpha_2)$.
\medskip

 We will use the following notations for simplicity.
\begin{align}
\begin{aligned}\label{notation}
&S_L=\sum_{s=1}^{3}l_{s}, ~~ S_R=\sum_{s=1}^{3}r_{s}, ~~
\Lambda_L=\sum_{s=1}^{3}z_{s}^{2}l_{s}, ~~\Lambda_R=\sum_{s=1}^{3}z_{s}^{2}r_{s},\\
&m_l=-z_{1}z_{2}z_{3}\frac{S_L}{\Lambda_L},~~
m_r=-z_{1}z_{2}z_{3}\frac{S_R}{\Lambda_R},~~
\rho=\frac{\Lambda_R}{\Lambda_L}.
\end{aligned}
\end{align}

Let $D_{0}$ be the zeroth order of $D$ in $Q_{2}$. For $V\neq0$, under the boundary electroneutrality conditions, let $g: \mathbb{C} \rightarrow \mathbb{C}$ be the meromorphic function defined as
\begin{align}\label{gj}
g(\sigma):=
\sum_{s=1}^{3}\frac{~z_{s}^{2}r_{s}}{z_{s}-\sigma}
-e^{V\sigma}
 \sum_{s=1}^{3}\frac{~z_{s}^{2}l_{s}}{z_{s}-\sigma},
\end{align}
and denote
\begin{align}\label{gammafun}
\gamma(\sigma)
=g(\sigma)\frac{1}{\sigma}\prod_{s=1}^{3}(\sigma-z_{s})
=e^{\sigma V}L(\sigma)-R(\sigma),
\end{align}
where 
\begin{align*}
L(\sigma)=z_{1}z_{2}z_{3}S_L+\sigma\Lambda_L=\Lambda_L(\sigma-m_l),~~
R(\sigma)=z_{1}z_{2}z_{3}S_R+\sigma\Lambda_R=\Lambda_R(\sigma-m_r).
\end{align*}
 
As a direct consequence of a general result in  \cite{LX15}, one has that $D_0$ has a zero eigenvalue and its nonzero eigenvalues $\sigma_{10}$ and $\sigma_{20}$ are the unique solutions of $\gamma(\sigma)=0$
in  the stripe
\[S=\Big\{z\in \C: \mbox{Im}(z) \in (-\pi/|V|, \pi/|V|)\Big\}.\]

\bigskip

We now recall several results from \cite{SL22} that will be used in this work.
\begin{prop}[Proposition 6.3, \cite{SL22}]\label{Lk0}
Suppose $V\neq 0$.

 (i) If $\sigma_{10}\sigma_{20}\neq0$, 
then $\phi^{[j]}_{0}$ is uniquely   determined by
\begin{align}\label{phia}
e_{0}^{T}e^{(V-\phi^{[j]}_{0})D_{0}}L-S_L=\alpha_jg'(0).
\end{align}

(ii) If $\sigma_{10}\neq0$ and $\sigma_{20}=0$,
then $\phi^{[j]}_{0}$ is uniquely   determined by
\begin{align*}
 e_{0}^{T}\Gamma^{-1}e^{(V-\phi_0^{[j]})D_{0}}-\sum_{s=1}^{3}\frac{l_s}{z_s}-(V-\phi_0^{[j]})S_L
=\alpha_j g''(0).
\end{align*}
\end{prop}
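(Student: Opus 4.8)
The plan is to carry out the entire argument at zeroth order in $Q_2$, where I claim the matching system (\ref{MC3}) degenerates into the pure zero-permanent-charge connecting problem, so that $\phi_0^{[j]}$ is nothing but the value at $x_j$ of the $Q\equiv 0$ singular orbit. First I would note that at $Q_2=0$ every $Q_i$ vanishes, so each interior boundary-layer equation becomes $\sum_s z_s c_s^{[j]}e^{z_s(\phi^{[j]}-\phi^{[j,\pm]})}=0$; since the preassigned states inherit electroneutrality $e_0^T\Gamma C=0$ (verified below), this forces $\phi_0^{[j,\pm]}=\phi_0^{[j]}$ and $c_{k0}^{[j,\pm]}=c_{k0}^{[j]}$, i.e.\ the interior layers are trivial, and under (\ref{ENC}) so is the layer at $x=0$. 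Hence $C_0^{[0,+]}=L$, $\phi_0^{[0,+]}=V$, and on $[0,1]$ one has a single regular layer carrying the constant zeroth-order flux, governed by the matrix $D_0$.

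The heart of the argument is to use the potential $\phi$ rather than $x$ as the independent variable along this regular layer. From the Nernst--Planck relations with $Q=0$, together with $\phi'=-I_0/(h\,b^TC)$ obtained by differentiating the electroneutrality constraint, I get the exact linear law
\[
\frac{{\rm d}C}{{\rm d}\phi}=-D_0C,\qquad C(\phi)=e^{(V-\phi)D_0}L ,
\]
which immediately gives $c_{k0}^{[j]}=e^{(V-\phi_0^{[j]})D_0}L$. Because $e_0^T\Gamma^2=b^T$ and $e_0^T\Gamma J_0=I_0$, one checks $e_0^T\Gamma D_0=0$, so the valence row vector $e_0^T\Gamma$ is a left null vector of $D_0$ and $e_0^T\Gamma C(\phi)\equiv e_0^T\Gamma L=0$, confirming electroneutrality along the orbit. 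The same relation rewrites as the geometry bridge ${\rm d}H=-(b^TC/I_0)\,{\rm d}\phi$, linking the potential profile to $\alpha_j=H(x_j)/H(1)$.

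For part (i) I would track the total concentration $S(\phi)=e_0^TC(\phi)=e_0^Te^{(V-\phi)D_0}L$. Differentiating and using $e_0^T\Gamma C=0$ cancels the diagonal part and leaves $S'(\phi)=(\bar J/I_0)\,b^TC$ with $\bar J=\sum_s J_{s0}$; combined with ${\rm d}H=-(b^TC/I_0)\,{\rm d}\phi$ this collapses to ${\rm d}S=-\bar J\,{\rm d}H$. Integrating from $x=0$ to $x_j$ and to $x=1$ and taking the quotient eliminates $\bar J$ and yields $S(\phi_0^{[j]})-S_L=\alpha_j(S_R-S_L)$; a short Taylor computation gives $g'(0)=S_R-S_L$, which is exactly (\ref{phia}). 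Uniqueness follows since $\sigma_{10}\sigma_{20}\neq0$ is equivalent to $\gamma(0)\neq0$, hence $S_L\neq S_R$ and $\bar J\neq0$, so $S$ is strictly monotone in $\phi$ along the orbit (recall $b^TC>0$ and $\phi$ is strictly monotone in $x$).

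Part (ii) is where the main obstacle lies, and it is precisely the degeneracy signalled by $\sigma_{20}=0$: since the nonzero eigenvalues are roots of $\gamma$, $\sigma_{20}=0$ forces $\gamma(0)=0$, i.e.\ $S_L=S_R$ and $\bar J=0$, so the leading identity ${\rm d}S=-\bar J\,{\rm d}H$ becomes vacuous ($S\equiv S_L$) and no longer locates $x_j$. The remedy is to pass to the next moment $W(\phi)=e_0^T\Gamma^{-1}C(\phi)=\sum_s c_s/z_s$; using $\Gamma^{-1}D_0=I-(1/I_0)\Gamma^{-1}J_0b^T$ and $S\equiv S_L$ one obtains ${\rm d}W=-S_L\,{\rm d}\phi-\tilde J\,{\rm d}H$ with $\tilde J=\sum_s J_{s0}/z_s$. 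Integrating and again forming the quotient with the whole channel produces $W(\phi_0^{[j]})-W(V)-(V-\phi_0^{[j]})S_L=\alpha_j\big(W(0)-W(V)-VS_L\big)$, and inserting $W(0)=\sum_s r_s/z_s$, $W(V)=\sum_s l_s/z_s$ and matching the bracket against the second Taylor coefficient of $g$ at $\sigma=0$ identifies it with $g''(0)$, giving the stated formula; here uniqueness requires the nondegeneracy $\tilde J\neq0$, equivalent to $g''(0)\neq0$ (i.e.\ $\sigma=0$ a simple root of $\gamma$). The only delicate points are the bookkeeping of the Jordan block of $D_0$ at the double eigenvalue $0$ (harmless, since the moment identities are exact and independent of diagonalizability) and tracking the scalar normalization in the Taylor identification.
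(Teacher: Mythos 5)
The paper itself contains no proof of this proposition --- it is imported verbatim from \cite{SL22} (Proposition 6.3) --- so your argument can only be judged against the statement and the machinery the paper reviews. Your route is the expected one: at zeroth order in $Q_2$ the matching system collapses to the $Q\equiv 0$ connecting problem, the regular layer obeys the linear law $\frac{{\rm d}C}{{\rm d}\phi}=-D_0C$ (the Liu--Xu structure behind Proposition \ref{evalue2f}), and the moment identities ${\rm d}S=-\bar J\,{\rm d}H$ and ${\rm d}W=-S\,{\rm d}\phi-\tilde J\,{\rm d}H$, with $S=e_0^TC$ and $W=e_0^T\Gamma^{-1}C$, convert the location of $x_j$ into the stated equations. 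Part (i) of your proof is correct and complete in outline, including uniqueness: $\sigma_{10}\sigma_{20}\neq0\iff\gamma(0)\neq0\iff S_L\neq S_R\iff\bar J\neq0$, and $g'(0)=S_R-S_L$ under electroneutrality, so $S(\phi_0^{[j]})-S_L=\alpha_jg'(0)$ with a strictly monotone left-hand side along the orbit.

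In part (ii), however, your final ``Taylor identification'' does not close, and this is a concrete gap rather than bookkeeping. With $g$ as defined in (\ref{gj}) and $\sum_sz_sl_s=0$, a direct computation gives
\[
g''(0)=2\Big(\sum_{s=1}^{3}\frac{r_s}{z_s}-\sum_{s=1}^{3}\frac{l_s}{z_s}-VS_L\Big),
\]
so the bracket your integration produces is the \emph{second Taylor coefficient} $g''(0)/2$, not $g''(0)$. Carried out honestly, your argument proves
\[
e_{0}^{T}\Gamma^{-1}e^{(V-\phi_0^{[j]})D_{0}}L-\sum_{s=1}^{3}\frac{l_s}{z_s}-(V-\phi_0^{[j]})S_L=\frac{\alpha_j}{2}\,g''(0),
\]
which differs from the displayed statement by a factor of $2$. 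You flagged ``tracking the scalar normalization'' as the delicate point and then asserted it works out; as written, it does not. To be fair, the quoted statement is itself suspect: its left-hand side omits the vector $L$ after the exponential (so it does not even typecheck as scalar $=$ scalar), and the same transcription slip may account for the missing $\tfrac12$; but a complete proof must either produce the constant and reconcile it with the statement, or explicitly record the discrepancy. Everything else in your part (ii) --- the degeneracy $\bar J=0$ and $S\equiv S_L$ when $\sigma_{20}=0$, the passage to the moment $W$, the remark that the identities are insensitive to the Jordan structure of $D_0$, and uniqueness from $\tilde J\neq0\iff g''(0)\neq0$ --- is sound.
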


\begin{prop}[Proposition 3.3, \cite{SL22}]\label{Tij}   One has
   \begin{align*}
 \tau_{ij}=&T_{ij}(\phi_0^{[2]})-T_{ij}(\phi_0^{[1]})
 \end{align*} 
 where, for a permutation $\{i,j,k\}$ of $\{1,2,3\}$ (the convention to be used in the rest),
\begin{itemize}
\item[(i)] if $\sigma_{10}\neq \sigma_{20}$, then 
\begin{align}\label{sigmaijphi}
T_{ij}(\phi)
=\frac{(z_{i}-z_{j})(\sigma_{10}-z_{k})}
      {\sigma_{10}\gamma'(\sigma_{10})}(e^{\sigma_{10}\phi}-1)
+\frac{(z_{i}-z_{j})(\sigma_{20}-z_{k})}
      {\sigma_{20}\gamma'(\sigma_{20})}(e^{\sigma_{20}\phi}-1);
\end{align}
(If $\sigma_{j0}=0$, then the above formula is defined by applying L'Hopital rule.)
\item[(ii)] if $\sigma_{10}=\sigma_{20}=\sigma_0$, then
\begin{align*}
T_{ij}(\phi)
=\frac{2(z_{i}-z_{j})(\sigma_{0}-z_{k})}
      {\sigma_{0}\gamma''(\sigma_{0})}
\Big(\phi e^{\sigma_{0}\phi}-(e^{\sigma_{0}\phi}-1)
       \big(\frac{\gamma'''(\sigma_{0})}
                  {3\gamma''(\sigma_{0})}
     +\frac{1}{\sigma_{0}}+\frac{1}{z_{k}-\sigma_{0}}\big)\Big).
\end{align*}
(If $\sigma_{0}=0$, then the above formula is defined by applying L'Hopital rule.)
\end{itemize}
\end{prop}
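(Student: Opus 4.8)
The plan is to exploit the factorization $J_k^{[j]}=I^{[j]}f_k^{[j]}$ of Proposition \ref{evalue2f} together with the observation that the matrix $D$ in (\ref{Df}) is assembled only from the fluxes $J$ and the valence data $\Gamma,b$. Since the matching conditions (\ref{MC3}) force $J_k^{[1]}=J_k^{[2]}=J_k^{[3]}=:J_k$, the eigenvalues $\sigma_1,\sigma_2,\sigma_3=0$ of $D$, and hence each $f_k^{[j]}=f_k$, are global (interval-independent), and the scalar $I^{[j]}$ cancels in the flux ratio:
\[
\frac{J_i}{J_j}=\frac{f_i}{f_j},\qquad\text{so}\qquad \tau_{ij}=\frac{{\rm d}}{{\rm d}Q_2}\ln\frac{J_i}{J_j}\Big|_{Q_2=0}=\frac{{\rm d}}{{\rm d}Q_2}\ln\frac{f_i}{f_j}\Big|_{Q_2=0}.
\]
Writing $f_k$ from (\ref{4f}) for $n=3$ as $f_k=(z_k-\sigma_1)(z_k-\sigma_2)/\big(z_k\prod_{m\neq k}(z_k-z_m)\big)$, the valence-only factors are $Q_2$-independent and drop out of the logarithmic derivative, leaving, with $\sigma_{l1}:={\rm d}\sigma_l/{\rm d}Q_2|_{Q_2=0}$,
\[
\tau_{ij}=(z_i-z_j)\sum_{l=1}^{2}\frac{\sigma_{l1}}{(z_i-\sigma_{l0})(z_j-\sigma_{l0})}=(z_i-z_j)\sum_{l=1}^{2}\frac{\sigma_{l1}(\sigma_{l0}-z_k)}{\prod_{s}(\sigma_{l0}-z_s)},
\]
where the last equality uses $(z_i-\sigma)(z_j-\sigma)=(\sigma-z_i)(\sigma-z_j)$ and $\prod_s(\sigma-z_s)=(\sigma-z_i)(\sigma-z_j)(\sigma-z_k)$. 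Thus the entire comparative effect is carried by the two eigenvalue shifts $\sigma_{l1}$.

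Comparing this with the claimed $\tau_{ij}=T_{ij}(\phi_0^{[2]})-T_{ij}(\phi_0^{[1]})$ and the case-(i) form of $T_{ij}$ in (\ref{sigmaijphi}), the whole statement reduces to the single eigenvalue-perturbation identity
\[
\sigma_{l1}=\frac{\prod_{s}(\sigma_{l0}-z_s)}{\sigma_{l0}\,\gamma'(\sigma_{l0})}\Big(e^{\sigma_{l0}\phi_0^{[2]}}-e^{\sigma_{l0}\phi_0^{[1]}}\Big),\qquad l=1,2,
\]
since substituting it termwise reproduces exactly the residue weight $(\sigma_{l0}-z_k)/\big(\sigma_{l0}\gamma'(\sigma_{l0})\big)$ of $T_{ij}$. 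To establish this identity I would perturb the defining relation for the nonzero eigenvalues. Recall (from the statement quoted just before Proposition \ref{Lk0}) that at $Q_2=0$ the $\sigma_{l0}$ are precisely the roots of $\gamma(\sigma)=0$ in the strip $S$; writing $\sigma_l=\sigma_{l0}+\sigma_{l1}Q_2+O(Q_2^2)$ and expanding that same relation to first order gives $\sigma_{l1}$ as the quotient of a first-order source by $\gamma'(\sigma_{l0})$. The source is produced by the $O(Q_2)$ terms of the matching system (\ref{MC3}): the permanent charge acts only on the middle interval, and the two matching equations at $x_1$ and $x_2$ feed in contributions weighted by the exponential kernel $e^{VDz}$ of $I^{[j]}$ evaluated at the zeroth-order potentials, yielding the factors $e^{\sigma_{l0}\phi_0^{[1]}}$ and $e^{\sigma_{l0}\phi_0^{[2]}}$ with opposite signs; the identity $\gamma(\sigma)=g(\sigma)\sigma^{-1}\prod_s(\sigma-z_s)$ then converts the residual rational factors into the stated $\prod_s(\sigma_{l0}-z_s)/\big(\sigma_{l0}\gamma'(\sigma_{l0})\big)$. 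Proposition \ref{Lk0}, which fixes $\phi_0^{[j]}$ through $\alpha_j$ and $g$, closes the computation.

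The main obstacle is exactly this first-order perturbation, because $D$ depends on the very fluxes being perturbed, so the pair ``fluxes $\leftrightarrow$ eigenvalues of $D$'' must be expanded self-consistently and then shown to collapse to the clean difference $e^{\sigma_{l0}\phi_0^{[2]}}-e^{\sigma_{l0}\phi_0^{[1]}}$. I expect the delicate part to be the bookkeeping of the $O(Q_2)$ contributions across the two jump points and the repeated use of $\gamma(\sigma_{l0})=0$ to simplify the rational factors; the sign conventions ($z_1>z_2>0>z_3$, $\sigma_3=0$) and the L'Hôpital reductions when some $\sigma_{l0}=0$ must also be tracked carefully, since those degenerate subcases are where the formula (\ref{sigmaijphi}) is only defined by continuation.

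Finally, for the confluent case (ii) $\sigma_{10}=\sigma_{20}=\sigma_0$, where $\gamma$ has a double root so $\gamma'(\sigma_0)=0$ and the case-(i) expression is singular, I would obtain $T_{ij}$ by taking the limit $\sigma_{20}\to\sigma_{10}$ in (\ref{sigmaijphi}): the two summands individually blow up but their combination has a removable singularity whose Taylor resolution at $\sigma_0$ introduces precisely $\gamma''(\sigma_0)$ and $\gamma'''(\sigma_0)$, matching the coefficients $2/\big(\sigma_0\gamma''(\sigma_0)\big)$ and $\gamma'''(\sigma_0)/\big(3\gamma''(\sigma_0)\big)$ in the stated formula; equivalently one can redo the eigenvalue perturbation directly at a double root, where the second-order expansion of $\gamma$ supplies the same $\gamma''$ and $\gamma'''$ factors. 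The $\sigma_0=0$ case is handled in the same limiting spirit via L'Hôpital.
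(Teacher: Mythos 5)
The paper itself contains no proof of this proposition: it is imported verbatim from \cite{SL22} (Proposition 3.3 there) and used as a black box, so there is no in-paper argument to compare against; your attempt has to be judged on its own merits.

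Your reduction is correct and genuinely elegant. Since the matching conditions (\ref{MC3}) force $J_k^{[1]}=J_k^{[2]}=J_k^{[3]}$, the matrix $D$ of (\ref{Df}) and its eigenvalues are indeed interval-independent, $J_i/J_j=f_i/f_j$ follows from Proposition \ref{evalue2f}, and the logarithmic derivative of (\ref{4f}) gives
\[
\tau_{ij}=(z_i-z_j)\sum_{l=1,2}\frac{\sigma_{l1}\,(\sigma_{l0}-z_k)}{\prod_{s}(\sigma_{l0}-z_s)},
\]
so that case (i) of the proposition is exactly equivalent to the termwise identity
\[
\sigma_{l1}=\frac{\prod_{s}(\sigma_{l0}-z_s)}{\sigma_{l0}\,\gamma'(\sigma_{l0})}\Big(e^{\sigma_{l0}\phi_0^{[2]}}-e^{\sigma_{l0}\phi_0^{[1]}}\Big).
\]
I also checked that the stated case-(ii) expression is precisely the double-root residue of $(\sigma-z_k)(e^{\sigma\phi}-1)/(\sigma\gamma(\sigma))$, so your confluent-limit interpretation is consistent with the formula as printed.

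The genuine gap is that the displayed identity for $\sigma_{l1}$ --- which is where all of the actual PNP content of the proposition lives; everything else in your argument is algebra on (\ref{4f}) --- is asserted rather than proven. Your sketch (``the matching equations feed in contributions weighted by the kernel $e^{V_jDz}$, yielding $e^{\sigma_{l0}\phi_0^{[1]}}$ and $e^{\sigma_{l0}\phi_0^{[2]}}$ with opposite signs'') is the right heuristic, but for $Q_2\neq 0$ the eigenvalues of $D$ are not defined by ``$\gamma(\sigma)=0$ plus a source''; that relation only characterizes the zeroth-order eigenvalues. To make the step rigorous one must expand the full matching system (\ref{MC3}) --- all unknowns $\phi^{[j]}$, $c_k^{[j]}$, $J_k$ simultaneously and self-consistently --- to first order in $Q_2$, solve the resulting linear system, and only then does the clean difference of exponentials emerge. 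You flag this yourself as ``the main obstacle,'' and that is accurate: as written, the attempt is a correct and useful reformulation of the proposition, not a proof of it. A secondary defect concerns case (ii): the alternative you propose of ``redoing the eigenvalue perturbation directly at a double root'' fails as stated, because when $\gamma$ has a double root the two eigenvalues of $D(Q_2)$ generically split like $\sqrt{Q_2}$, so the coefficients $\sigma_{l1}$ do not exist; only symmetric functions of the pair (equivalently, the coefficients of the characteristic polynomial, through which $f_i/f_j$ actually depends on the eigenvalues) are differentiable in $Q_2$. Your other route --- taking the limit $\sigma_{20}\to\sigma_{10}$ in (\ref{sigmaijphi}) --- is the correct fix, but it requires an additional continuity statement (that $\tau_{ij}$, defined from the fluxes, varies continuously with the boundary data through the coalescence), which is also left unproven.
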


We recall that, if $V=0$, then $\phi^{[1]}_0=\phi^{[2]}_0=0$ (Proposition 2.1 in \cite{ELX15}), 
and hence, $\tau_{ij}=0$.  Also if $V>0$ (resp. $V<0$), 
then $\phi'(x)<0$ and $\phi_0^{[1]}>\phi_0^{[2]}$ (resp. $\phi'(x)>0$ and $\phi_0^{[1]}<\phi_0^{[2]}$).
In the sequel, we will assume $V\neq 0$ Therefore,  for $\sigma_{10},\sigma_{20}\in \R$,    the domain of $T_{ij}(\phi)$ is $\R\setminus\{0\}$;
for $\sigma_{10}=\overline{\sigma_{20}}=x+iy$ with $y\neq0$, 
the domain of $T_{ij}(\phi)$ is $(-\pi/|y|,0)\cup(0,\pi/|y|)$.

\begin{rem}\label{zeroV}
(i) It follows from Proposition 3.5 in \cite{SL22} that for $V\neq 0$,  the situation $\tau_{12}=\tau_{23}=\tau_{13}=0$ cannot occur.

(ii) For a fixed boundary condition, $\phi_0^{[j]}$ depends on $\alpha_j$,
so we will use $\tau_{ij}(\alpha_1,\alpha_2)$ whenever needed to 
emphasize the dependance of $\tau_{ij}$ on $(\alpha_1,\alpha_2)\in \Delta$,
where 
\[\Delta:=\{(\alpha_1,\alpha_2)\mid0 \leq \alpha_1 \leq \alpha_2 \leq 1\}.\]
\end{rem}

\section{New case study  on $\tau_{ij}$ for $\sigma_{10}=\overline{\sigma_{20}}=x+iy$}\label{SS2}
\setcounter{equation}{0}
In \cite{SL22}, the authors presented preliminary results for
$\sigma_{10}\neq \sigma_{20}\in \R$ and $\sigma_{10}=\sigma_{20}\in \R$
and applied to study the equi-chemical-potential-difference case.
Here we give some preliminary results for $\sigma_{10}=\overline{\sigma_{20}}\in \C$.

\subsection{Critical points of $T_{ij}(\phi)$, the quantity $\theta_{ij}$ and the function $P_{ij}(\alpha)$.}
In this case, the function $T_{ij}(\phi)$ in (\ref{sigmaijphi}) is still real-value 
and we will make a preparation before a statement of the result. 
By Proposition \ref{evalue2f}, we know $y\in(-\pi/|V|,\pi/|V|)$.  
For $\sigma_{10}=\overline{\sigma_{20}}=x+iy$, one has, from (\ref{gammafun}),
\begin{align*}
e^{(x+iy)V}
=\frac{\Lambda_R}{\Lambda_L}\frac{x-m_r+iy}{x-m_l+iy},
\end{align*}
which leads to
\begin{align}\label{scx}
e^{xV}\cos(yV)
=\rho
\frac{y^{2}+(x-m_l)(x-m_r)}
     {y^{2}+(x-m_l)^{2}}\;\mbox{ and }\;
e^{xV}\sin(yV)
=\rho
\frac{y(m_r-m_l)}
     {y^{2}+(x-m_l)^{2}}.
\end{align}

Thus,
\[e^{2xV}=\rho^2\frac{y^2+(x-m_r)^2}{y^2+(x-m_l)^2},\;\mbox{ and hence,}\;
y^2=\frac{\rho^2(x-m_r)^2-e^{2xV}(x-m_l)^2}{e^{2xV}-\rho^2}.\]

\begin{lem}\label{veta} If  $\sigma_{10}=\overline{\sigma_{20}}=x+iy$ with $y\neq0$,
then $V(m_r-m_l)>0$.
\end{lem}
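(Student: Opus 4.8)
The plan is to read off the sign directly from the second identity in (\ref{scx}), which expresses $e^{xV}\sin(yV)$ in terms of $m_r-m_l$. First I would note that $\rho=\Lambda_R/\Lambda_L>0$, since $\Lambda_L=\sum_{s=1}^3 z_s^2 l_s$ and $\Lambda_R=\sum_{s=1}^3 z_s^2 r_s$ are sums with strictly positive concentrations and nonzero valences, and that the denominator $y^2+(x-m_l)^2$ is strictly positive. Because $e^{xV}>0$ as well, the second relation in (\ref{scx}) shows that the sign of $\sin(yV)$ coincides with the sign of $y(m_r-m_l)$, that is,
\[\operatorname{sign}(\sin(yV))=\operatorname{sign}\big(y(m_r-m_l)\big).\]

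Next I would pin down the sign of $\sin(yV)$ from the a priori confinement of the complex eigenvalues to the strip $S$, which gives $y\in(-\pi/|V|,\pi/|V|)$. This forces $yV\in(-\pi,\pi)$ regardless of the sign of $V$, and since $y\neq0$ and $V\neq0$ we in fact have $yV\in(-\pi,0)\cup(0,\pi)$. On this punctured interval the sine is strictly positive on $(0,\pi)$ and strictly negative on $(-\pi,0)$, so
\[\operatorname{sign}(\sin(yV))=\operatorname{sign}(yV)=\operatorname{sign}(y)\operatorname{sign}(V).\]
In particular $\sin(yV)\neq0$, which rules out $m_r=m_l$: were the two equal, the second identity in (\ref{scx}) would give $\sin(yV)=0$, a contradiction. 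Hence $m_r\neq m_l$, and $\operatorname{sign}\big(y(m_r-m_l)\big)=\operatorname{sign}(y)\operatorname{sign}(m_r-m_l)$ is well defined.

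Combining the two displays yields $\operatorname{sign}(y)\operatorname{sign}(m_r-m_l)=\operatorname{sign}(y)\operatorname{sign}(V)$, and cancelling the nonzero factor $\operatorname{sign}(y)$ gives $\operatorname{sign}(m_r-m_l)=\operatorname{sign}(V)$, i.e. $V(m_r-m_l)>0$, as claimed.

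The argument is short, and the only delicate point is the localization of $yV$ within a single half-period of the sine: one must invoke the bound $y\in(-\pi/|V|,\pi/|V|)$ coming from the strip $S$ in which $\sigma_{10},\sigma_{20}$ are confined, so that the sign of $\sin(yV)$ is unambiguously that of $yV$. I expect this containment step to be the main (if mild) obstacle, since without it the passage from the sign of $y(m_r-m_l)$ to the sign of $V$ would fail.
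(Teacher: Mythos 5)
Your proof is correct and is essentially the paper's own argument: the paper's one-line proof cites the second equation in (\ref{scx}) together with the fact $yV\sin(yV)>0$, which is exactly the sign identity $\operatorname{sign}(\sin(yV))=\operatorname{sign}(yV)$ that you derive from the strip confinement $y\in(-\pi/|V|,\pi/|V|)$. You have simply written out in full the details (positivity of $\rho$, $e^{xV}$, and the denominator, plus the localization of $yV$ in $(-\pi,0)\cup(0,\pi)$) that the paper leaves implicit.
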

\begin{proof} It is a consequence of the 2nd equation in (\ref{scx}) and $yV\sin(yV)>0$.
\end{proof}

It follows from (\ref{gammafun}) that $\gamma'(x+iy)=\zeta+i\kappa$ where
\begin{align}
\begin{aligned}\label{c2d2}
\zeta
&=\Lambda_R
\frac{(m_l-x)(m_r-m_l)+V(x-m_r)(y^2+(x-m_l)^{2} )}{y^2+(x-m_l)^{2}},\\
\kappa
&=\Lambda_R
\frac{y(m_r-m_l)+yV(y^{2}+(x-m_l)^{2})}{y^2+(x-m_l)^2}.
\end{aligned}
\end{align}


\begin{lem}
Assume  $\sigma_{10}=\overline{\sigma_{20}}=x+iy$ with $y\neq0$.
 Over $(-\pi/|y|,\pi/|y|)$, one has
\begin{itemize}
\item[(i)] if $y\kappa+(x-z_{k})\zeta=0$, then $\phi=0$ is  the unique critical point of $T_{ij}(\phi)$;

\item[(ii)]
if $y\kappa+(x-z_{k})\zeta\neq 0$,  then $T_{ij}(\phi)$ has 
two critical points, one in $(-\frac{\pi}{|y|},0)$ and the other in $(0,\frac{\pi}{|y|})$.
\end{itemize}
\end{lem}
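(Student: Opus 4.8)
The plan is to reduce the stationarity condition $T_{ij}'(\phi)=0$ to a single real trigonometric equation and then count its solutions in $(-\pi/|y|,\pi/|y|)$. First I would differentiate the formula (\ref{sigmaijphi}). Because $z_i,z_j,z_k$ are real and $\sigma_{20}=\overline{\sigma_{10}}$, and $\gamma$ has real coefficients so that $\gamma'(\sigma_{20})=\overline{\gamma'(\sigma_{10})}$, the two summands defining $T_{ij}$ are complex conjugates; hence $T_{ij}$ is real-valued and, using $\frac{\mathrm d}{\mathrm d\phi}\frac{e^{\sigma\phi}-1}{\sigma}=e^{\sigma\phi}$ together with $e^{\sigma_{10}\phi}=e^{x\phi}(\cos(y\phi)+i\sin(y\phi))$,
\begin{align*}
T_{ij}'(\phi)=2(z_i-z_j)e^{x\phi}\,\mathrm{Re}\!\left[\frac{\sigma_{10}-z_k}{\gamma'(\sigma_{10})}\big(\cos(y\phi)+i\sin(y\phi)\big)\right].
\end{align*}

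Next I would set $A:=(\sigma_{10}-z_k)/\gamma'(\sigma_{10})=\mathrm{Re}(A)+i\,\mathrm{Im}(A)$. Inserting $\gamma'(\sigma_{10})=\zeta+i\kappa$ from (\ref{c2d2}) and rationalizing gives $\mathrm{Re}(A)=[(x-z_k)\zeta+y\kappa]/(\zeta^2+\kappa^2)$, whose numerator is exactly the quantity $y\kappa+(x-z_k)\zeta$ appearing in the statement. Since $z_i\neq z_j$ (the valences are distinct) and $e^{x\phi}>0$, the equation $T_{ij}'(\phi)=0$ is equivalent to
\begin{align*}
\mathrm{Re}(A)\cos(y\phi)-\mathrm{Im}(A)\sin(y\phi)=0,\qquad\text{equivalently}\qquad \cos\big(y\phi+\arg A\big)=0.
\end{align*}
The essential non-degeneracy input is $A\neq0$: because $y\neq0$ forces $\sigma_{10}=x+iy\neq z_k$, the numerator $\sigma_{10}-z_k$ is nonzero, while $\gamma'(\sigma_{10})\neq0$ (it is the denominator already present in (\ref{sigmaijphi})), so $\mathrm{Re}(A)$ and $\mathrm{Im}(A)$ never vanish simultaneously.

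With this reduction the two cases are immediate. In part (i) the hypothesis $y\kappa+(x-z_k)\zeta=0$ is precisely $\mathrm{Re}(A)=0$, hence $\mathrm{Im}(A)\neq0$ and the equation collapses to $\sin(y\phi)=0$; on $(-\pi/|y|,\pi/|y|)$ the variable $\theta=y\phi$ ranges over $(-\pi,\pi)$, so $\phi=0$ is the unique root. In part (ii), $\mathrm{Re}(A)\neq0$ gives $\arg A\not\equiv\pm\pi/2\ (\mathrm{mod}\ \pi)$; writing the equation as $\cos(\theta+\arg A)=0$ with $\theta\in(-\pi,\pi)$, its roots are spaced exactly $\pi$ apart, and $\mathrm{Re}(A)\neq0$ also guarantees that neither endpoint $\theta=\pm\pi$ is a root. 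Hence exactly two roots lie in the open interval; since $\theta=0$ is not among them and two points differing by $\pi$ cannot both sit in $(-\pi,0)$ nor both in $(0,\pi)$, the two roots straddle $0$, producing one critical point in $(-\pi/|y|,0)$ and one in $(0,\pi/|y|)$, regardless of the sign of $y$ (which only reverses the orientation of $\theta=y\phi$).

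I expect the main obstacle to be the careful root-counting in part (ii): one must use $\mathrm{Re}(A)\neq0$ simultaneously to exclude $\theta=0$ from the solution set and to keep both roots off the endpoints $\pm\pi$, so that precisely two interior roots remain and are forced to opposite sides of $0$. The preceding algebra—differentiating (\ref{sigmaijphi}), taking the real part, and identifying the numerator of $\mathrm{Re}(A)$ with the stated quantity—is routine.
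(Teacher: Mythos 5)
Your proposal is correct and takes essentially the same route as the paper: both differentiate (\ref{sigmaijphi}) to get $T'_{ij}(\phi)=2(z_i-z_j)e^{x\phi}\bigl[\cos(y\phi)\,(y\kappa+(x-z_k)\zeta)-\sin(y\phi)\,(y\zeta-(x-z_k)\kappa)\bigr]/(\zeta^2+\kappa^2)$ (your $\mathrm{Re}(A)$, $\mathrm{Im}(A)$ are exactly these coefficients divided by $\zeta^2+\kappa^2$), reduce case (i) to $\sin(y\phi)=0$, and count roots of the remaining trigonometric equation on $(-\pi,\pi)$ using the nondegeneracy $\zeta+i\kappa\neq 0$. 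The only difference is presentational: in case (ii) the paper solves $\cot(y\phi)=\bigl(y\zeta-(x-z_k)\kappa\bigr)/\bigl(y\kappa+(x-z_k)\zeta\bigr)$, getting one root per half-interval from the monotone bijectivity of cotangent on $(-\pi,0)$ and $(0,\pi)$, whereas you obtain the same count from the $\pi$-spacing of the roots of $\cos(y\phi+\arg A)=0$.
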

\begin{proof} 
It follows from (\ref{sigmaijphi}) that
\begin{align*}
T'_{ij}(\phi)
=2(z_{i}-z_{j})e^{x\phi}
\frac{\cos(y\phi)(y\kappa+(x-z_{k})\zeta)
     -\sin(y\phi)(y\zeta-(x-z_{k})\kappa)}
     {\zeta^{2}+\kappa^{2}}.
\end{align*}

(i) If $y\kappa+(x-z_{k})\zeta=0$,  then  $\zeta\neq0$ since, 
otherwise, $\kappa=0$ too that contradicts to $\zeta+i\kappa\neq 0$. One then gets  
\begin{align}\label{DTij2}
T'_{ij}(\phi)=-2(z_{i}-z_{j})e^{x\phi}\sin(y\phi)\frac{y}{\zeta}.
\end{align}
Thus $T'_{ij}(\phi)=0$ has a unique root $\phi=0$. 
 
 (ii) If $y\kappa+(x-z_{k})\zeta\neq0$, then $T'_{ij}(\phi)=0$ implies $\sin(y\phi)\neq 0$, and hence, 
\[\cot(y\phi)=\frac{y\zeta-(x-z_{k})\kappa}{y\kappa+(x-z_{k})\zeta},\]
which has two roots  in $(-\frac{\pi}{|y|},\frac{\pi}{|y|})$, one in $(-\frac{\pi}{|y|},0)$ and the other in $(0,\frac{\pi}{|y|})$.
\end{proof}

For $y\kappa+(x-z_{k})\zeta\neq0$,  we denote the two critical points of $T_{ij}(\phi)$ in $(-\pi/{|y|},\pi/{|y|})$ by $V_{ij}$  and $\tilde{V}_{ij}$,  and set
\begin{align*}
\theta_{ij}=\frac{1}{g'(0)}e_{0}^{T}\Big(e^{(V-V_{ij})D_{0}}L-L\Big)\;\mbox{ and }\;
\tilde{\theta}_{ij}=\frac{1}{g'(0)}e_{0}^{T}\Big(e^{(V-\tilde{V}_{ij})D_{0}}L-L\Big).
\end{align*}
Without loss of generality, we assume $V\tilde{V}_{ij}<0<VV_{ij}$.

\begin{figure}[hpbt]
\centering
\includegraphics[width=0.45\linewidth]{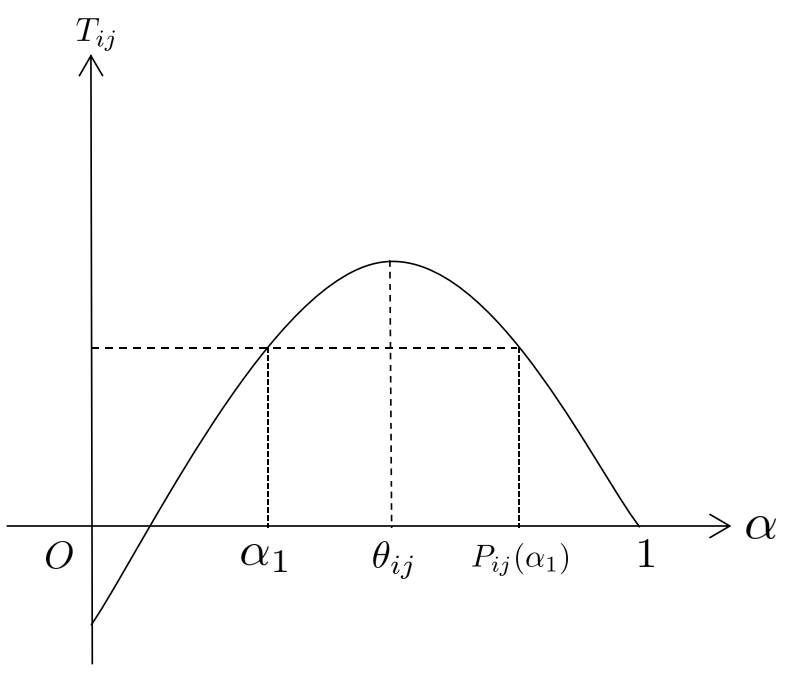} 
\caption{\em Function $P_{ij}(\alpha)$ from $T_{ij}(\alpha)$}
\label{FigPij}
\vspace{-0.5em}
\end{figure}

 \begin{prop}\label{thetainV}
If $V>0$ (resp. $V<0$), then $\theta_{ij}$ and $\tilde{\theta}_{ij}$ 
are decreasing (resp. increasing) in  $V_{ij}$ and $\tilde{V}_{ij}$, respectively.
Furthermore,  $\theta_{ij}<1<\tilde{\theta}_{ij}$ holds true,
and one has $\theta_{ij}\in [0,1]$ if and only if $V_{ij}$ lies between $0$ and $V$. 
\end{prop}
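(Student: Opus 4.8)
The plan is to reduce the whole statement to the analysis of the single scalar function
\[
\Theta(\phi):=\frac{1}{g'(0)}\,e_0^{T}\big(e^{(V-\phi)D_0}L-L\big),
\]
so that $\theta_{ij}=\Theta(V_{ij})$ and $\tilde\theta_{ij}=\Theta(\tilde V_{ij})$. First I would record two normalizations coming directly from Proposition \ref{Lk0}(i): since $e_0^{T}L=S_L$, the defining relation there reads $\Theta(\phi_0^{[j]})=\alpha_j$, and evaluating at the two geometric extremes $\alpha=0$ (where $\phi_0=V$) and $\alpha=1$ (where $\phi_0=\phi(1)=0$) gives $\Theta(V)=0$ and $\Theta(0)=1$. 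By Remark \ref{swapsym} it suffices to treat $V>0$, the case $V<0$ following from the $V\mapsto-V$, $L\leftrightarrow R$ symmetry. The mechanism behind the monotonicity assertion is that, for $V>0$, $\alpha_1<\alpha_2$ forces $\phi_0^{[1]}>\phi_0^{[2]}$ (recalled after Proposition \ref{Tij}), so the correspondence $\phi_0\mapsto\alpha=\Theta(\phi_0)$ is strictly decreasing; hence $\theta_{ij}$ and $\tilde\theta_{ij}$ decrease as $V_{ij}$ and $\tilde V_{ij}$ increase, once these points are located on the decreasing branch.

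To make the sign analysis effective I would compute $\Theta$ explicitly from the spectral data of $D_0$. Differentiating gives $\Theta'(\phi)=-\tfrac{1}{g'(0)}e_0^{T}D_0e^{(V-\phi)D_0}L$, in which only the $\sigma_{10},\sigma_{20}$ modes survive. Writing $a_1:=e_0^{T}P_{\sigma_{10}}L$ and using the rank-one structure $D_0=\Gamma-I_0^{-1}J_0 b^{T}$ with the flux formula $J_{k0}=I_0 f_{k0}$ of Proposition \ref{evalue2f}, the left and right eigenvectors are $y_s\propto z_s^2/(z_s-\sigma_{10})$ and $x_s\propto J_{s0}/(z_s-\sigma_{10})$; residue/partial-fraction identities then collapse $a_1$ to the closed forms
\[
a_1=\frac{-\sigma_{20}L(\sigma_{10})}{z_1z_2z_3(\sigma_{10}-\sigma_{20})},\qquad
a_1e^{V\sigma_{10}}=\frac{-\sigma_{20}R(\sigma_{10})}{z_1z_2z_3(\sigma_{10}-\sigma_{20})},
\]
the second equality using $\gamma(\sigma_{10})=0$, i.e.\ $e^{V\sigma_{10}}L(\sigma_{10})=R(\sigma_{10})$. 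This turns $\Theta$ and $\Theta'$ into explicit exponential-trigonometric expressions in $\phi$, from which one reads that $\Theta$ is strictly decreasing on the branch carrying $[0,V]$, consistent with $\Theta(0)=1>0=\Theta(V)$, and hence is a bijection $\Theta:[0,V]\to[0,1]$.

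For the comparison $\theta_{ij}<1<\tilde\theta_{ij}$ I would use $\Theta(0)=1$ to write
\[
\theta_{ij}-1=\frac{1}{g'(0)}e_0^{T}\big(e^{(V-V_{ij})D_0}-e^{VD_0}\big)L
=\frac{2}{g'(0)}\operatorname{Re}\!\Big[a_1e^{V\sigma_{10}}\big(e^{-V_{ij}\sigma_{10}}-1\big)\Big],
\]
and similarly for $\tilde\theta_{ij}$. The key is to evaluate this at the critical point, where the condition $\operatorname{Re}\big[(\sigma_{10}-z_k)\gamma'(\sigma_{10})^{-1}e^{\sigma_{10}V_{ij}}\big]=0$ lets me eliminate the transcendental factor $e^{\pm\sigma_{10}V_{ij}}$ in favor of $(\sigma_{10}-z_k)/\gamma'(\sigma_{10})$, reducing $\operatorname{sgn}(\theta_{ij}-1)$ to an algebraic inequality in $\sigma_{10},m_l,m_r,z_k,y,V$. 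The final clause then follows from the bijection $\Theta:[0,V]\to[0,1]$: one has $\theta_{ij}\in[0,1]$ exactly when $V_{ij}$ lies in $[0,V]$, i.e.\ between $0$ and $V$.

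The main obstacle is precisely this sign determination. Because $\sigma_{10}=\overline{\sigma_{20}}$ is genuinely complex, $\Theta$ is oscillatory (period $2\pi/|y|$ in $\phi$) and is \emph{not} monotone across the whole interval $(\tilde V_{ij},V_{ij})$; moreover the turning points of $\Theta$ do not coincide with the $k$-dependent critical points $V_{ij},\tilde V_{ij}$ of $T_{ij}$. Hence $\theta_{ij}<1<\tilde\theta_{ij}$ cannot be read off from a single monotonicity interval, and one must pin down the location of $V_{ij}$ and $\tilde V_{ij}$ relative to the decreasing branch of $\Theta$ through $[0,V]$. This amounts to controlling the phase relations among $\arg(\sigma_{10}-z_k)$, $\arg\gamma'(\sigma_{10})$ and $\arg a_1$, all linked by $\gamma(\sigma_{10})=0$ and by $V(m_r-m_l)>0$ (Lemma \ref{veta}); carrying out this phase bookkeeping carefully is the crux of the proof.
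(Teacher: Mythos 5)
Your setup is sound, and in fact the computational core of your proposal is correct: the reduction to the scalar function $\Theta(\phi)=\frac{1}{g'(0)}e_0^T\big(e^{(V-\phi)D_0}L-L\big)$ with $\theta_{ij}=\Theta(V_{ij})$, $\tilde\theta_{ij}=\Theta(\tilde V_{ij})$, the normalizations $\Theta(V)=0$, $\Theta(0)=1$, and the residue formula $a_1=-\sigma_{20}L(\sigma_{10})/\big(z_1z_2z_3(\sigma_{10}-\sigma_{20})\big)$ all check out. But as a proof this is incomplete exactly where the content lies: the monotonicity claim, the inequality $\theta_{ij}<1<\tilde\theta_{ij}$, and the ``only if'' half of the final equivalence are all deferred to the ``phase bookkeeping'' that you explicitly leave undone. (Note also that a bijection $\Theta:[0,V]\to[0,1]$ by itself yields only the ``if'' half of the last clause, since $\Theta$ takes values in $[0,1]$ again on another monotone branch.)

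The more serious problem is that the deferred step cannot be completed in the form you plan, because part of the statement fails. Both $T_{ij}$ and $\Theta$ oscillate in $\phi$ with the same frequency $y$: the two critical points $V_{ij}$ and $\tilde V_{ij}$ of $T_{ij}$ are exactly $\pi/|y|$ apart, while consecutive critical points of $\Theta$ are also spaced $\pi/|y|$ apart, with monotonicity alternating between consecutive intervals. Hence $V_{ij}$ and $\tilde V_{ij}$ necessarily lie on branches of $\Theta$ of \emph{opposite} monotonicity, so the first sentence of the proposition cannot hold for both points, and $\tilde\theta_{ij}>1$ can fail. A concrete check in the pure imaginary case: $z_1=2$, $z_2=1$, $z_3=-1$, $L=(2,1,5)^T$, $R=(1/2,4,5)^T$ give $m_l=8/7<m_r=19/11$, $\rho=11/14$, $y^2=7/5$, $yV\approx0.202$ from (\ref{Vycot}); then (\ref{lk}) and (\ref{cotV}) yield $yV_{23}\approx0.639$, and one computes $\theta_{23}\approx-2.8$ and $\tilde\theta_{23}=\Theta(V_{23}-\pi/y)\approx-7.8$, so $\tilde\theta_{23}<1$. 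What is true, and is all that Lemma \ref{KVij2*} and Theorem \ref{ThmM} later invoke, is the pair of statements $\theta_{ij}<1$ and ``$\theta_{ij}\in[0,1]$ iff $V_{ij}$ lies between $0$ and $V$.'' To prove these, the missing ingredient is not your algebraic sign computation at the critical point, but the observation that, for $V>0$, the set $\{\phi\in(-\pi/y,\pi/y):\Theta(\phi)\in[0,1]\}$ has two components, $[0,V]$ and a second one lying entirely in $(-\pi/y,0)$ (equivalently, $\Theta<0$ on the increasing branch intersected with $(0,\pi/y)$), while $V_{ij}>0$ by the sign convention. With that, both surviving claims follow, and one also sees what Proposition \ref{KVij2} actually needs in place of $\tilde\theta_{ij}>1$: merely $\tilde\theta_{ij}\notin[0,1]$ whenever $\theta_{ij}\in[0,1]$.
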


When $\theta_{ij}\in [0,1]$,  we can define a function $P_{ij}:(-\infty,\theta_{ij}]\to [\theta_{ij},\infty)$ as follows: 
$P_{ij}(\alpha)=\beta$ if  $T_{ij}(\beta)=T_{ij}(\alpha)$ (see Figure \ref{FigPij}). This function plays important role in characterizing transitions from one behavior of permanent charge effect to another.

\subsection{Signs of $\tau_{ij}$}\label{genComplex}
In this section, we will discuss signs of $\tau_{ij}$ for
$\sigma_{10}=\overline{\sigma_{20}}=x+iy$ with $y\neq0$,
especially the case with $x=0$.

\subsubsection{General results}
The following result gives the conditions of determining signs of $\tau_{ij}$,
which are determined by $(V,L,R)$ and $(\alpha_1,\alpha_2)$. 
As in Remark \ref{zeroV},  we will use $\tau_{ij}(\alpha_1,\alpha_2)$ 
whenever needed to emphasize the dependance of $\tau_{ij}$ on 
$(\alpha_1,\alpha_2) \in \Delta$.


\begin{prop}\label{KVij2}
Assume $\sigma_{10}=\overline{\sigma_{20}}=x+iy$ with $y\neq0$.
For $V\neq0$, one has
\begin{itemize}
\item[(i)]
if $y\kappa+(x-z_{k})\zeta=0$,
then $(z_{i}-z_{j})\tau_{ij}(\alpha_1,\alpha_2)\zeta>0$ for any $(\alpha_1,\alpha_2) \in \Delta$;

\item[(ii)]
if $y\kappa+(x-z_{k})\zeta\neq0$ and $\theta_{ij}\in [0,1]$, 
then exactly one of the followings occurs
\begin{itemize}
\item[(a)]
when $(z_{i}-z_{j})V(y\kappa+(x-z_{k})\zeta)<0$, 
\[\tau_{ij}(\alpha_1,\alpha_2)
=\left\{\begin{array}{ll}
<0,~~ \alpha_1 <\theta_{ij}\mbox{ and }~\alpha_2<P_{ij}(\alpha_1),\\
>0,~~ \alpha_1>\theta_{ij}\mbox{ or }~ \alpha_2> P_{ij}(\alpha_1);\end{array}\right.\]

\item[(b)]
when $(z_{i}-z_{j})V(y\kappa+(x-z_{k})\zeta)>0$, 
\[\tau_{ij}(\alpha_1,\alpha_2)
=\left\{\begin{array}{ll}
>0,~~ \alpha_1 <\theta_{ij}\mbox{ and }~\alpha_2<P_{ij}(\alpha_1),\\
<0,~~ \alpha_1>\theta_{ij}\mbox{ or }~ \alpha_2> P_{ij}(\alpha_1).\end{array}\right.\]
\end{itemize}
\end{itemize}
\end{prop}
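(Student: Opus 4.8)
The plan is to transport the entire question from the channel-geometry variables $(\alpha_1,\alpha_2)$ to the interface potentials $(\phi_0^{[1]},\phi_0^{[2]})$. Since $\sigma_{10}=\overline{\sigma_{20}}=x+iy$ with $y\neq0$ forces $\sigma_{10}\sigma_{20}=x^2+y^2\neq0$, Proposition \ref{Lk0}(i) applies and, dividing (\ref{phia}) by $g'(0)$ and using $e_0^TL=S_L$, the relation determining $\phi_0^{[j]}$ is $\alpha_j=\Psi(\phi_0^{[j]})$, where $\Psi(\phi):=\frac{1}{g'(0)}\big(e_0^Te^{(V-\phi)D_0}L-S_L\big)$. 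In this notation $\theta_{ij}=\Psi(V_{ij})$ and $\tilde\theta_{ij}=\Psi(\tilde V_{ij})$, and $\Psi$ is a strictly monotone bijection (the orientation set by $\mathrm{sgn}\,V$, a fact already underlying Proposition \ref{thetainV}) carrying the $\phi$-interval between $0$ and $V$ onto $\alpha\in[0,1]$, with $\Psi(V)=0$ directly since $e^{0\cdot D_0}=I$. Writing $\hat T_{ij}:=T_{ij}\circ\Psi^{-1}$, Proposition \ref{Tij} gives $\tau_{ij}(\alpha_1,\alpha_2)=T_{ij}(\phi_0^{[2]})-T_{ij}(\phi_0^{[1]})=\hat T_{ij}(\alpha_2)-\hat T_{ij}(\alpha_1)$, so the whole problem reduces to the monotonicity profile of the single-variable function $\hat T_{ij}$ on $[0,1]$ together with the ordering $0\le\alpha_1\le\alpha_2\le1$.

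For part (i), the hypothesis $y\kappa+(x-z_k)\zeta=0$ puts us in case (i) of the critical-point lemma, where $T_{ij}'(\phi)=-2(z_i-z_j)e^{x\phi}\sin(y\phi)\,y/\zeta$ by (\ref{DTij2}), with its only zero at $\phi=0$. Both $\phi_0^{[1]}$ and $\phi_0^{[2]}$ lie in the closed interval between $0$ and $V$ (as $\alpha_j\in[0,1]$), on which $\sin(y\phi)$ keeps the sign of $y$, so $T_{ij}$ is strictly monotone there. A short sign count using $y\sin(y\phi)>0$, $e^{x\phi}>0$, and the ordering of $\phi_0^{[1]},\phi_0^{[2]}$ dictated by $\mathrm{sgn}\,V$ (recorded in the text before Remark \ref{zeroV}) yields $\mathrm{sgn}\,\tau_{ij}=\mathrm{sgn}((z_i-z_j)\zeta)$, that is $(z_i-z_j)\tau_{ij}\zeta>0$, uniformly in $(\alpha_1,\alpha_2)\in\Delta$.

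For part (ii) we are in case (ii) of the lemma: $T_{ij}$ has exactly the two critical points $V_{ij},\tilde V_{ij}$, and when $\theta_{ij}\in[0,1]$ Proposition \ref{thetainV} places $V_{ij}$ strictly between $0$ and $V$ while $\tilde V_{ij}$ (the partner with $V\tilde V_{ij}<0$) sits on the far side of $0$, hence outside $[0,V]$; thus $\hat T_{ij}$ has a unique critical point on $[0,1]$, namely $\alpha=\theta_{ij}$. Whether it is a minimum or a maximum is decided by the sign of $\hat T_{ij}'$ on a one-sided neighborhood, which equals $\mathrm{sgn}\,T_{ij}'$ on the segment $(0,V_{ij})$ (no sign change occurs there, so it equals $\mathrm{sgn}\,T_{ij}'(0)=\mathrm{sgn}((z_i-z_j)(y\kappa+(x-z_k)\zeta))$) times $\mathrm{sgn}(\Psi^{-1})'=-\,\mathrm{sgn}\,V$. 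This bookkeeping produces the criterion that $\theta_{ij}$ is a local minimum of $\hat T_{ij}$ exactly when $(z_i-z_j)V(y\kappa+(x-z_k)\zeta)<0$ (case (a)) and a local maximum when the product is positive (case (b)). Finally, with $\hat T_{ij}$ decreasing-then-increasing (a) or increasing-then-decreasing (b) across $\theta_{ij}$, comparing $\hat T_{ij}(\alpha_2)$ with $\hat T_{ij}(\alpha_1)$ for $\alpha_1\le\alpha_2$ splits into three subcases (both below $\theta_{ij}$, both above, and straddling), and the partner function $P_{ij}$, defined by $\hat T_{ij}(P_{ij}(\alpha_1))=\hat T_{ij}(\alpha_1)$, is precisely what converts the straddling comparison into the threshold comparison of $\alpha_2$ with $P_{ij}(\alpha_1)$; assembling the subcases gives the stated sign tables.

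I expect the main obstacle to be the sign bookkeeping in part (ii) rather than any conceptual difficulty: one must correctly line up four orientations — the direction of the bijection $\Psi$ (which flips with $\mathrm{sgn}\,V$), the constant sign of $T_{ij}'$ on the one-sided neighborhood $(0,V_{ij})$, the value $\mathrm{sgn}\,T_{ij}'(0)$, and the fact that $V_{ij}$ rather than $\tilde V_{ij}$ is the in-range extremum — so that they combine to exactly the product $(z_i-z_j)V(y\kappa+(x-z_k)\zeta)$; mishandling any single orientation swaps cases (a) and (b). A secondary care point is the degenerate and boundary configurations (the equalities $\alpha_1=\theta_{ij}$ or $\alpha_2=P_{ij}(\alpha_1)$, and $\phi_0^{[j]}$ reaching an endpoint $0$ or $V$), which must be checked so that the strict inequalities in the statement are consistent and the case partition is exhaustive on all of $\Delta$.
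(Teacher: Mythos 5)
Your proposal is correct and rests on the same ingredients as the paper's proof: the decomposition $\tau_{ij}=T_{ij}(\phi_0^{[2]})-T_{ij}(\phi_0^{[1]})$ from Proposition \ref{Tij}, the monotone dependence of $\phi_0^{[j]}$ on $\alpha_j$ with orientation $-\mathrm{sgn}\,V$, the critical-point lemma, and Proposition \ref{thetainV}. Your part (i) is essentially the paper's argument (the paper phrases your monotonicity count via the mean value theorem, writing $(z_i-z_j)\zeta\phi^*\tau_{ij}=(z_i-z_j)\zeta\phi^*T_{ij}'(\phi^*)(\phi_0^{[2]}-\phi_0^{[1]})$ and using $y\phi\sin(y\phi)>0$). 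Where you genuinely diverge is part (ii): you collapse everything to the single function $\hat T_{ij}=T_{ij}\circ\Psi^{-1}$ on $[0,1]$, whose unique interior critical point is $\theta_{ij}$ (since $\tilde V_{ij}$ lies on the far side of $0$), and read the sign table off from unimodality, with $P_{ij}$ appearing as the partner function. The paper instead keeps both variables: using the identities (\ref{pd1}) and the explicit formula for $T''_{ij}(V_{ij}^{s})$, it shows that $(\theta_{ij},\theta_{ij})$ and $(\tilde\theta_{ij},\tilde\theta_{ij})$ are saddle points of $\tau_{ij}$ (only the first lying in $\Delta$), deduces the signs region by region from the monotonicity of the partial derivatives, and produces the curve $\alpha_2=P_{ij}(\alpha_1)$ by the implicit function theorem. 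Your route is more elementary and makes the case split (a)/(b) transparent as a minimum-versus-maximum dichotomy; the paper's route buys, as byproducts, the saddle structure on $\Delta$ and a rigorous construction of $P_{ij}$, at the cost of heavier bookkeeping.

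One step you must patch: you infer the decreasing-then-increasing (or increasing-then-decreasing) shape of $\hat T_{ij}$ from the sign of $\hat T_{ij}'$ on one side of $\theta_{ij}$ only. Since you only know that $T'_{ij}$ has no zeros on each of the two subintervals, this leaves open the possibility that $V_{ij}$ is a degenerate critical point (an inflection), in which case $\hat T_{ij}$ would be monotone through $\theta_{ij}$ and the straddling comparisons in your sign table would fail. This is precisely what the paper's computation of $T''_{ij}(V_{ij}^{s})$ rules out: since $V_{ij}\neq 0$ and $yV_{ij}\in(-\pi,\pi)$, one has $\sin(yV_{ij})\neq 0$, hence $T''_{ij}(V_{ij})\neq 0$ and $T'_{ij}$ does change sign at $V_{ij}$. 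Equivalently, you can argue that on each side of $0$ the critical-point equation reads $\cot(y\phi)=\bigl(y\zeta-(x-z_k)\kappa\bigr)/\bigl(y\kappa+(x-z_k)\zeta\bigr)$ with $\cot(y\phi)$ strictly monotone, so the crossing is transversal. With that one line added, your argument is complete.
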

\begin{proof}
For $V\neq0$, it can be shown that 
$\tau_{ij}(\alpha,\alpha)=0$ for any $\alpha\in[0,1]$
and
\begin{align}\label{pd1}
\partial_{\alpha_1}\tau_{ij}(\alpha_1,\alpha_2) 
=-\frac{{\rm d}\phi^{[1]}_{0}}{{\rm d}\alpha_1}
T'_{ij}(\phi^{[1]}_{0}),~~
\partial_{\alpha_2}\tau_{ij}(\alpha_1,\alpha_2)
=\frac{{\rm d}\phi^{[2]}_{0}}{{\rm d}\alpha_2}
T'_{ij}(\phi^{[2]}_{0}).
\end{align}
Note that 
for $V>0$, $\frac{{\rm d}\phi^{[1]}_{0}}{{\rm d}\alpha_1}, \frac{{\rm d}\phi^{[2]}_{0}}{{\rm d}\alpha_2}<0$;
for $V<0$, $\frac{{\rm d}\phi^{[1]}_{0}}{{\rm d}\alpha_1}, \frac{{\rm d}\phi^{[2]}_{0}}{{\rm d}\alpha_2}>0$.

(i) Suppose $y\kappa+(x-z_{k})\zeta=0$.
Since $y\phi\in(-\pi,\pi)$, one has $y\phi\sin(y\phi)>0$ if $\phi\neq 0$. It then follows from (\ref{DTij2}) that 
\begin{align}\label{signTd}
(z_{i}-z_{j})\zeta\phi T'_{ij}(\phi)< 0\;\mbox{ for }\; \phi\neq 0.
\end{align}
 Recall that $\tau_{ij}=T_{ij}(\phi_0^{[2]})-T_{ij}(\phi_0^{[1]})$. Thus, for some $\phi^*$ between $\phi_0^{[1]}$ and $\phi_0^{[2]}$,
\begin{align}\label{pstar}
(z_{i}-z_{j})\zeta\phi^*\tau_{ij}
=(z_{i}-z_{j})\zeta\phi^* T_{ij}'(\phi^*)(\phi_0^{[2]}-\phi_0^{[1]}).
\end{align}
For $V>0$, one has $\phi_0^{[1]}>\phi_0^{[2]}>0$, and hence, $\phi^*>0$. 
It then follows from (\ref{signTd}) and (\ref{pstar}) that
$(z_{i}-z_{j})\zeta\tau_{ij}>0$. Similarly, for $V<0$, one also has $(z_{i}-z_{j})\zeta\tau_{ij}>0$.

%
\medskip

(ii) Suppose that $y\kappa+(x-z_{k})\zeta\neq0$ and $\theta_{ij}\in [0,1]$.
For ease of the proof, we set
\[V_{ij}^{1}=V_{ij},~~V_{ij}^{2}=\tilde{V}_{ij},~~
\theta_{ij}^{1}=\theta_{ij},~~\theta_{ij}^{2}=\tilde{\theta}_{ij}.\]
Note that $0\le \theta_{ij}^{1}\le 1<\theta_{ij}^{2}$.
Since $T'_{ij}(V_{ij}^{s})=0$, $s=1,2$, a direct calculation gives
\begin{align*}
\begin{aligned}
\partial_{\alpha_1\alpha_1}\tau_{ij}(\theta_{ij}^{s},\theta_{ij}^{s})
=-\left(\frac{{\rm d}\phi^{[1]}_{0}}{{\rm d}\alpha_1}\right)^{2}
T''_{ij}(V_{ij}^{s}),~~
\partial_{\alpha_2\alpha_2}\tau_{ij}(\theta_{ij}^{s},\theta_{ij}^{s})
=\left(\frac{{\rm d}\phi^{[2]}_{0}}{{\rm d}\alpha_2}\right)^{2}
T''_{ij}(V_{ij}^{s}),
\end{aligned}
\end{align*}
where
\begin{align*}
T''_{ij}(V_{ij}^{s})
=-\frac{2(z_{i}-z_{j})y\sin(yV_{ij}^{s})e^{xV_{ij}^{s}}}{\zeta^{2}+\kappa^{2}}
  \frac{(y\kappa+(x-z_{k})\zeta)^{2}+(y\zeta-(x-z_{k})\kappa)^{2}}{y\kappa+(x-z_{k})\zeta}.
\end{align*}
By \[yV_{ij}^{1}\in(-\pi,\pi),~~yV_{ij}^{2}\in(-\pi,\pi),~~V_{ij}^{1}V_{ij}^{2}<0\]
one has $\sin(yV_{ij}^{1})\sin(yV_{ij}^{2})<0$, which leads to
$T''_{ij}(V_{ij}^{1})T''_{ij}(V_{ij}^{2})<0$.
It follows from
\begin{align*}
\partial_{\alpha_1\alpha_1}\tau_{ij}(\theta_{ij}^{s},\theta_{ij}^{s})
\partial_{\alpha_2\alpha_2}\tau_{ij}(\theta_{ij}^{s},\theta_{ij}^{s})<0,~~
\partial_{\alpha_1\alpha_2}\tau_{ij}(\alpha_1,\alpha_2)
=\partial_{\alpha_2\alpha_1}\tau_{ij}(\alpha_1,\alpha_2)=0
\end{align*}
that $(\theta_{ij}^{1},\theta_{ij}^{1})\in\Delta$ and $(\theta_{ij}^{2},\theta_{ij}^{2})\notin\Delta$
are two saddle points of  $\tau_{ij}(\alpha_1,\alpha_2)$.

When $(z_{i}-z_{j})V(y\kappa+(x-z_{k})\zeta)<0$, 
one has  $T''_{ij}(V_{ij}^{1})>0>T''_{ij}(V_{ij}^{2})$.
For any $\tilde{\alpha}_{1}<\tilde{\alpha}_{2}\leq\theta_{ij}^{1}
\leq\tilde{\beta}_{1}<\tilde{\beta}_{2}\leq\theta_{ij}^{2}
\leq\tilde{\eta}_{1}<\tilde{\eta}_{2}$,
it follows from
\begin{align*}\partial_{\alpha_1\alpha_1}\tau_{ij}(\theta_{ij}^{1},\theta_{ij}^{1})
<0<\partial_{\alpha_2\alpha_2}\tau_{ij}(\theta_{ij}^{1},\theta_{ij}^{1}),~~
\partial_{\alpha_1\alpha_1}\tau_{ij}(\theta_{ij}^{2},\theta_{ij}^{2})
>0>\partial_{\alpha_2\alpha_2}\tau_{ij}(\theta_{ij}^{2},\theta_{ij}^{2})
\end{align*}
that
\[\partial_{\alpha_{1}}\tau_{ij}(\tilde{\alpha}_{1},\tilde{\alpha}_{2})>0,~~
\partial_{\alpha_{1}}\tau_{ij}(\tilde{\beta}_{1},\tilde{\beta}_{2})<0<
\partial_{\alpha_{2}}\tau_{ij}(\tilde{\beta}_{1},\tilde{\beta}_{2}),~~
\partial_{\alpha_{2}}\tau_{ij}(\tilde{\eta}_{1},\tilde{\eta}_{2})<0.\]
Thus, \[\tau_{ij}(\tilde{\alpha}_{1},\tilde{\alpha}_{2})
<0<\tau_{ij}(\tilde{\beta}_{1},\tilde{\beta}_{2}),~~
\tau_{ij}(\tilde{\eta}_{1},\tilde{\eta}_{2})<0.\]
Similarly, when $(z_{i}-z_{j})V(y\kappa+(x-z_{k})\zeta)>0$, one has
\[\tau_{ij}(\tilde{\alpha}_{1},\tilde{\alpha}_{2})
>0>\tau_{ij}(\tilde{\beta}_{1},\tilde{\beta}_{2}),~~
\tau_{ij}(\tilde{\eta}_{1},\tilde{\eta}_{2})>0.\]
Moreover, it follows from the implicit function theorem that
$\tau_{ij}(\alpha_1,\alpha_2)=0$ has a solution 
$\alpha_2=P_{ij}(\alpha_1)\ge \theta_{ij}^{1}$ for $\alpha_1\in [0,\theta_{ij}^{1}]$ 
satisfying $\theta_{ij}^{1}=P_{ij}(\theta_{ij}^{1})$ (see Fig. \ref{FigPij}).
The proof is completed.
\end{proof}

\subsubsection{Special case with $\sigma_{10}=\overline{\sigma_{20}}=iy$ with $y>0$}\label{SPcs}
In this part, we will conduct a detailed study for the case with
$\sigma_{10}=\overline{\sigma_{20}}=iy$ with $y>0$, more precisely, $0<y<\pi/|V|$.
It follows from (\ref{scx}) that
\begin{align}\label{cs}
\cos(yV)=\rho\frac{y^{2}+m_lm_r}{y^{2}+m_l^{2}}\;\mbox{ and }\;
\sin(yV)=\rho\frac{y(m_r-m_l)}{y^{2}+m_l^{2}}.
\end{align}
By the symmetry in Remark \ref{swapsym} and Lemma \ref{veta},  
we only consider $m_l<m_r$, or equivalently, $V>0$.
It is direct to obtain from $yV\in(0,\pi)$ and (\ref{cs})  that $yV\in(0,\frac{\pi}{2})$.

%

\begin{lem}\label{VyL}
Assume $m_l<m_r$. 
$\sigma_{10}=\overline{\sigma_{20}}=iy$ with $y>0$ is the solution of $\gamma(\sigma)=0$ if and only if
\begin{align}\label{Vycot}
V=\frac{1}{y}
\cot^{-1} \left(\frac{\rho^{2}m_r+m_l}
{\sqrt{\left(\rho^{2} m_r^{2}-m_l^{2}\right)\left(1-\rho^{2}\right)}}\right)\in(0,\frac{\pi}{2y}),
\end{align}
where
\begin{align*}
y
=\sqrt{\frac{\rho^{2} m_r^{2}-m_l^{2}}{1-\rho^{2}}}\;\mbox{ and }\;
\rho\in(\frac{m_l}{m_r},1).
\end{align*}
\end{lem}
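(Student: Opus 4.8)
The plan is to recognize that the statement ``$\sigma_{10}=\overline{\sigma_{20}}=iy$ is a root of $\gamma$'' is, by (\ref{gammafun}), the single complex equation $e^{iyV}L(iy)=R(iy)$, i.e.
\[
e^{iyV}=\rho\,\frac{iy-m_r}{iy-m_l},
\]
and to split this into its two real scalar components. Rationalizing the right-hand side reproduces exactly the pair (\ref{cs}); indeed the modulus of this equation gives the first relation in (\ref{cs}) squared-and-summed, and its argument gives the ratio of the two. Since every manipulation below is a genuine equivalence, I would organize the whole argument as one reversible chain of equalities, so that the forward implication (root $\Rightarrow$ the two displayed formulas) and the converse are read off simultaneously; I need only keep track of the sign/range constraints that select the correct branch.

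First I would extract the modulus. Taking $|\cdot|$ of the complex equation (equivalently forming $\cos^2(yV)+\sin^2(yV)$ from (\ref{cs})) yields $1=\rho^2(y^2+m_r^2)/(y^2+m_l^2)$, which is precisely the relation $e^{2xV}=\rho^2(\cdots)$ displayed after (\ref{scx}) specialized to $x=0$. Solving gives $y^2(1-\rho^2)=\rho^2 m_r^2-m_l^2$, hence the stated $y=\sqrt{(\rho^2 m_r^2-m_l^2)/(1-\rho^2)}$. For $y$ to be real and positive the numerator and denominator must share a sign. Here I would use that $m_l,m_r>0$ (which follows from $z_1>z_2>0>z_3$ together with $S_L,S_R,\Lambda_L,\Lambda_R>0$, so that $-z_1z_2z_3>0$) and the standing assumption $m_l<m_r$: the case $\rho>1$ forces $\rho m_r<m_l$, impossible since $\rho m_r>m_r>m_l$; thus $1-\rho^2>0$ and $\rho m_r>m_l$, i.e. $\rho\in(m_l/m_r,1)$.

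Next I would extract the argument. From (\ref{cs}) both $\cos(yV)$ and $\sin(yV)$ are strictly positive (each factor $\rho$, $y$, $m_r-m_l$, $y^2+m_lm_r$ being positive), so $yV\in(0,\pi/2)$, consistent with the range $yV\in(0,\pi)$ noted after (\ref{cs}). Dividing the two equations in (\ref{cs}) cancels the common factor $\rho/(y^2+m_l^2)$ and leaves $\cot(yV)=(y^2+m_lm_r)/\big(y(m_r-m_l)\big)$. The one genuine computation is to insert the $y$-formula into the numerator: the factorization $y^2+m_lm_r=(m_r-m_l)(\rho^2 m_r+m_l)/(1-\rho^2)$, together with $y(1-\rho^2)=\sqrt{(1-\rho^2)(\rho^2 m_r^2-m_l^2)}$, collapses the quotient to $\cot(yV)=(\rho^2 m_r+m_l)/\sqrt{(\rho^2 m_r^2-m_l^2)(1-\rho^2)}$. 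Since this argument is positive and $yV\in(0,\pi/2)$, inverting with the principal branch of $\cot^{-1}$ produces $V=y^{-1}\cot^{-1}(\cdots)\in(0,\pi/(2y))$. Reading the chain backward establishes the converse, and since $yV<\pi/2<\pi$ forces $y<\pi/|V|$, the point $iy$ lies in the strip $S$ and is therefore a legitimate (hence, by the uniqueness statement from \cite{LX15}, the) nonzero eigenvalue. The only steps requiring care are the algebraic factorization of $y^2+m_lm_r$ and the accompanying sign bookkeeping that pins $\rho$ to $(m_l/m_r,1)$ and $yV$ to $(0,\pi/2)$; everything else is routine and fully reversible.
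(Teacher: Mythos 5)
Your proposal is correct and follows essentially the same route as the paper: both split $\gamma(iy)=0$ into the real pair (\ref{cs}), use $\cos^2+\sin^2=1$ (your modulus step) to solve for $y^2$ and pin $\rho\in(m_l/m_r,1)$, then divide the two equations (your argument step) to get $\cot(yV)=(y^2+m_lm_r)/(y(m_r-m_l))$ and simplify via the same factorization. The only difference is presentational: you make explicit the reversibility, the positivity of $m_l,m_r$, the principal-branch choice, and the strip membership guaranteeing $iy$ is \emph{the} eigenvalue, points the paper leaves implicit (it records $yV\in(0,\pi/2)$ just before the lemma).
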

\begin{proof}
By (\ref{cs}), one has
\[\rho^2\frac{(y^{2}+m_lm_r)^2+y^2(m_r-m_l)^2}{(y^{2}+m_l^{2})^2} =1\Longrightarrow \rho^2(y^{2}+m_r^{2})=y^{2}+m_l^{2}\Longrightarrow y^2=\frac{\rho^2m_r^2-m_l^2}{1-\rho^2}.\]
It is easy to see that the right-hand side is positive if and only if  $m_l/m_r<\rho<1$
which, in turn, is equivalent to $S_L<S_R$ and $\Lambda_R<\Lambda_L$. 
The formula (\ref{Vycot}) follows from
\begin{align*}
y=\sqrt{\frac{\rho^{2} m_r^{2}-m_l^{2}}{1-\rho^{2}}}>0\;\mbox{ and }\;
\cot(yV)
=\frac{y^{2}+m_lm_r}{y(m_r-m_l)}
=\frac{\rho^{2}m_r+m_l}
	   {\sqrt{\left(\rho^{2} m_r^{2}-m_l^{2}\right)\left(1-\rho^{2}\right)}}>0.
\end{align*}  
The proof is completed.
\end{proof}

By (\ref{c2d2}) and $\sigma_{10}=\overline{\sigma_{20}}=iy$, one has
\begin{align}\label{lk}
\zeta
=\Lambda_R\frac{m_l(m_r-m_l)-m_rV(y^{2}+m_l^{2})}{y^{2}+m_l^{2}},~~
\kappa
=\Lambda_R\frac{y(m_r-m_l)+yV(y^{2}+m_l^{2})}{y^{2}+m_l^{2}}.
\end{align}
Let
\begin{align*}
V_{0}
=\frac{m_l(m_r-m_l)}{m_r(y^{2}+m_l^{2})}.
\end{align*}

\begin{lem}\label{cd1}
Assume $m_l<m_r$ and $\sigma_{10}=\overline{\sigma_{20}}=iy$ with $y>0$.
One has 
\[V>V_{0},~~\kappa>0>\zeta,~~y\zeta+m_r\kappa>0.\]
\end{lem}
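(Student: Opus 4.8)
The plan is to treat the three assertions in increasing order of difficulty, disposing of $\kappa>0$ and $y\zeta+m_r\kappa>0$ as immediate consequences of the explicit formulas in (\ref{lk}) and reserving the real work for $V>V_0$ (equivalently $\zeta<0$). First I would record the elementary signs: from (\ref{notation}) together with $z_1>z_2>0>z_3$ one has $z_1z_2z_3<0$ while $S_L,S_R,\Lambda_L,\Lambda_R>0$, so $m_l,m_r>0$; combined with the standing hypothesis $m_l<m_r$ (equivalently $V>0$ by Lemma \ref{veta}) this gives $m_r-m_l>0$ and $V>0$. These signs make the formula (\ref{lk}) for $\kappa$ manifestly positive, since each of $y$, $m_r-m_l$, $V$, and $y^2+m_l^2$ is positive; this settles $\kappa>0$ at once.

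Next I would compute $y\zeta+m_r\kappa$ directly from (\ref{lk}) over the common denominator $y^2+m_l^2$. The two $V$-dependent terms, $-m_r y V(y^2+m_l^2)$ coming from $y\zeta$ and $+m_r y V(y^2+m_l^2)$ coming from $m_r\kappa$, cancel exactly, leaving
\[y\zeta+m_r\kappa=\frac{\Lambda_R\,y\,(m_r-m_l)(m_l+m_r)}{y^2+m_l^2},\]
which is positive because every factor is positive. This establishes the third assertion with no transcendental input.

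The crux is $V>V_0$, which I would recast through $\zeta$. Substituting the definition of $V_0$ into the numerator of $\zeta$ in (\ref{lk}) yields the factorization $\zeta=\Lambda_R m_r(V_0-V)$; since $\Lambda_R m_r>0$, proving $\zeta<0$ is exactly proving $V>V_0$, so the two remaining claims collapse into one. To prove $V>V_0$ I would use the constraint (\ref{cs}): dividing the sine relation by the cosine relation gives $\tan(yV)=t$ with $t:=\frac{y(m_r-m_l)}{y^2+m_lm_r}>0$, and since $yV\in(0,\pi/2)$ this means $yV=\arctan t$. I would then invoke the elementary inequality $\arctan t>\frac{t}{1+t^2}$ for $t>0$ (immediate from comparing derivatives), and evaluate $\frac{t}{1+t^2}$ using the identity $(y^2+m_lm_r)^2+y^2(m_r-m_l)^2=(y^2+m_l^2)(y^2+m_r^2)$, which gives $\frac{t}{1+t^2}=\frac{y(m_r-m_l)(y^2+m_lm_r)}{(y^2+m_l^2)(y^2+m_r^2)}$.

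Finally I would compare this with $yV_0=\frac{y\,m_l(m_r-m_l)}{m_r(y^2+m_l^2)}$. After cancelling the common positive factor $\frac{y(m_r-m_l)}{y^2+m_l^2}$, the desired inequality $\frac{t}{1+t^2}>yV_0$ reduces, upon cross-multiplying by the positive quantities $m_r$ and $y^2+m_r^2$, to $(m_r-m_l)y^2>0$, which holds by $m_l<m_r$. Chaining $yV=\arctan t>\frac{t}{1+t^2}>yV_0$ and dividing by $y>0$ yields $V>V_0$, hence $\zeta<0$. The only nonroutine step is this last one: securing the lower bound on $\arctan$ and checking it beats $yV_0$; everything else is sign bookkeeping and a single cancellation.
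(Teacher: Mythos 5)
Your proof is correct, and its overall skeleton matches the paper's: $\kappa>0$ is read off directly from (\ref{lk}), the factorization $\zeta=\Lambda_R m_r(V_0-V)$ collapses $\zeta<0$ and $V>V_0$ into a single claim, and $y\zeta+m_r\kappa>0$ follows from the exact cancellation of the $V$-dependent terms. The genuine difference is in the crux step $V>V_0$. The paper uses only the sine relation in (\ref{cs}): since Lemma \ref{VyL} gives $\rho>m_l/m_r$, one has $yV_0=\frac{m_l\sin(yV)}{m_r\rho}<\sin(yV)<yV$, a two-line argument whose sole transcendental input is $\sin u<u$ for $u>0$. You instead divide the two relations in (\ref{cs}) to get $\tan(yV)=t$ with $t=\frac{y(m_r-m_l)}{y^2+m_lm_r}$, invoke the calculus inequality $\arctan t>\frac{t}{1+t^2}$, and then check algebraically, via the identity $(y^2+m_lm_r)^2+y^2(m_r-m_l)^2=(y^2+m_l^2)(y^2+m_r^2)$, that $\frac{t}{1+t^2}>yV_0$ reduces to $(m_r-m_l)y^2>0$. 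Both routes are elementary and complete; the paper's is shorter and leans on the bound $\rho\in(m_l/m_r,1)$ already recorded in Lemma \ref{VyL}, while yours bypasses that lemma entirely at the cost of an extra identity and a sharper bound on $\arctan$. For the third claim, the paper packages the same computation as the bound $\frac{\zeta}{\kappa}\in(-\frac{m_r}{y},0)$ with $\kappa>0$; your explicit formula $y\zeta+m_r\kappa=\frac{\Lambda_R\,y\,(m_r-m_l)(m_l+m_r)}{y^2+m_l^2}$ is the same cancellation made visible, and is arguably the cleaner way to state it.
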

\begin{proof}
It follows from $yV\in(0,\frac{\pi}{2})$ and  $\rho\in(\frac{m_l}{m_r},1)$ that
\begin{align*}
yV_{0}-yV
=\frac{m_l\sin(yV)}{m_r\rho}-yV<\sin(yV)-yV<0,
\end{align*}
and hence $V>V_{0}$ by $y>0$.
It is direct to obtain from $m_l<m_r$ and (\ref{lk}) that $\kappa>0$, 
and $\zeta<0$ by $V>V_{0}$.
Note that 
\begin{align*}
\frac{\zeta}{\kappa}
=\frac{m_l(m_r-m_l)-m_rV(y^{2}+m_l^{2})}
     {y(m_r-m_l)+yV(y^{2}+m_l^{2})}<0.
\end{align*}
It can be shown from $V>0$ that $\frac{\zeta}{\kappa}\in(-\frac{m_r}{y},0)$,
which leads to $y\zeta+m_r\kappa>0$.
The proof is completed.
\end{proof}

For $y\kappa-z_{k}\zeta\neq0$, let
\begin{align}\label{cotV}
\theta_{ij}=\frac{1}{g'(0)}e_{0}^{T}\left(e^{(V-V_{ij})D_{0}}L-L\right),~~
V_{ij}=\frac{1}{y}\cot^{-1}\left(\frac{y\zeta+z_{k}\kappa}{y\kappa-z_{k}\zeta}\right)\in(0,\frac{\pi}{y}). 
\end{align}
One has from Proposition \ref{thetainV} that $\theta_{ij}<1$. 
In the following, we will determine conditions for $\theta_{ij}>0$. 

It follows from Proposition \ref{thetainV},
$yV\in(0,\frac{\pi}{2})$ and $yV_{ij}\in(0,\pi)$ that
$\theta_{ij}>0$ if and only if $V_{ij}<V$ if and only if $\cot(yV)<\cot(yV_{ij})$.
Since $\cot(yV)>0$, 
we first determine conditions for $\cot(yV_{ij})>0$, or equivalently, 
for $(y\zeta+z_{k}\kappa)(y\kappa-z_{k}\zeta)>0$.
Note that $z_{3}<0<z_{2}<m_l<m_r<z_{1}$.
One has from Lemma \ref{cd1} that
\begin{align}\label{ykz23}
y\kappa-z_{1}\zeta>y\kappa-z_{2}\zeta>0, ~~
y\zeta+z_{1}\kappa>0>y\zeta+z_{3}\kappa.
\end{align}

%
%

\begin{lem}\label{z3} 
Assume $m_l<m_r$ and $\sigma_{10}=\overline{\sigma_{20}}=iy$ with $y>0$.
One has

(i) $y\kappa-z_3\zeta<0$ if and only if 
\begin{align}\label{eqV}
\begin{split}
y^2+z_3m_r<0\;\mbox{ and }\;
V>\frac{(z_3m_l-y^2)(m_r-m_l)}{(z_3m_r+y^2)(y^2+m_l^2)}.
\end{split}
\end{align}

(ii) $y\zeta+z_2\kappa>0$ if and only if 
\[V<\frac{(m_l+z_2)(m_r-m_l)}{(m_r-z_2)(y^2+m_l^2)}.\]
\end{lem}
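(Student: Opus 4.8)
The plan is to reduce each inequality to the sign of an expression that is affine in $V$, by substituting the explicit formulas (\ref{lk}) for $\zeta$ and $\kappa$ and factoring out a manifestly positive common factor. Throughout I would write $A=y^2+m_l^2>0$ and $B=m_r-m_l>0$ as shorthand, and use the ordering $z_3<0<z_2<m_l<m_r<z_1$ together with $\Lambda_R>0$, $y>0$ and $V>0$ (the last coming from $m_l<m_r$ via Lemma \ref{veta}).

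First I would dispose of part (ii), which is the cleaner of the two. Substituting (\ref{lk}) and collecting terms gives
\[
y\zeta+z_2\kappa=\frac{\Lambda_R y}{A}\Big(B(m_l+z_2)+VA(z_2-m_r)\Big),
\]
so the sign of $y\zeta+z_2\kappa$ equals that of $B(m_l+z_2)+VA(z_2-m_r)$, since $\Lambda_R y/A>0$. Here $m_l+z_2>0$, while the coefficient of $V$, namely $A(z_2-m_r)$, is strictly negative because $z_2<m_r$. Thus the $V$-coefficient has a fixed sign, and $y\zeta+z_2\kappa>0$ is equivalent to $V<B(m_l+z_2)/\big(A(m_r-z_2)\big)$, which is exactly the stated bound after writing $A$ and $B$ back out. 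No case distinction is required.

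For part (i) the same substitution yields
\[
y\kappa-z_3\zeta=\frac{\Lambda_R}{A}\Big(B(y^2-z_3m_l)+VA(y^2+z_3m_r)\Big),
\]
so the sign is that of $B(y^2-z_3m_l)+VA(y^2+z_3m_r)$. The first summand is strictly positive (from $z_3<0$ and $m_l>0$ one gets $y^2-z_3m_l>0$), but now the coefficient of $V$, namely $A(y^2+z_3m_r)$, has \emph{indeterminate} sign because $z_3m_r<0$. I would therefore split on the sign of $P:=y^2+z_3m_r$. If $P\ge 0$, then since $V>0$ both summands are nonnegative and the first is strictly positive, so $y\kappa-z_3\zeta>0$ and the desired inequality fails; this shows $y^2+z_3m_r<0$ is necessary. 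If $P<0$, then dividing $B(y^2-z_3m_l)+VAP<0$ by $AP<0$ reverses the inequality and produces $V>B(z_3m_l-y^2)/(AP)$, which is precisely the displayed threshold in (\ref{eqV}).

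The only genuine obstacle is the indeterminate sign of the $V$-coefficient in part (i); everything else is routine sign bookkeeping from the ordering of $z_s$, $m_l$, $m_r$. Resolving that sign is exactly what generates the extra necessary condition $y^2+z_3m_r<0$ and the inequality reversal in the bound on $V$, and it is the structural reason that part (i) carries a two-clause condition while part (ii) carries a single bound.
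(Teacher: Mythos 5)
Your proposal is correct and follows essentially the same route as the paper: substitute the explicit formulas (\ref{lk}), factor out the positive quantity $\Lambda_R y/(y^2+m_l^2)$ (resp. $\Lambda_R/(y^2+m_l^2)$), reduce each claim to an inequality affine in $V$, and in part (i) split on the sign of $y^2+z_3m_r$ exactly as the paper does. The only difference is cosmetic shorthand ($A$, $B$) and a slightly more explicit explanation of why the case distinction is forced in (i) but not in (ii).
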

\begin{proof}    
(i) Using  (\ref{lk}), one  has  that $y\kappa-z_3\zeta<0$ if and only if 
\[(y^2-z_3m_l)(m_r-m_l)+V(y^2+z_3m_r)(y^2+m_l^2)<0.\]
If $y^2+z_3m_r\ge 0$, then the above fails since $z_3<0$, $V>0$ and $m_l<m_r$.
If $y^2+z_3m_r<0$, then the above is equivalent to  
\[V>\frac{(z_3m_l-y^2)(m_r-m_l)}{(z_3m_r+y^2)(y^2+m_l^2)}.\]

(ii) Using  (\ref{lk}), one  has  that $y\zeta+z_2\kappa>0$ if and only if 
\[y(m_l+z_2)(m_r-m_l)-Vy(m_r-z_2)(y^2+m_l^2)>0.\]
By $m_r>z_2$ and $y>0$, the above is equivalent to
\[V<\frac{(m_l+z_2)(m_r-m_l)}{(m_r-z_2)(y^2+m_l^2)}.\]
The proof is completed.
\end{proof}

%
%

We now determine conditions for $\theta_{ij}>0$, or equivalently,
for the sign of $\cot(yV)-\cot(yV_{ij})$.
It is direct to obtain
\begin{align}\label{cotD}
\cot(yV)-\cot(yV_{ij})
=\frac{M_{ij}(V-V^*_{ij})(y^{2}+m_l^{2})}
      {y(y\kappa-z_{k}\zeta)(m_r-m_l)},
\end{align}
where
\begin{align*} 
M_{ij}
=(z_{k}m_l+y^{2})(y^{2}+m_r^{2})\;\mbox{ and }\;
V^*_{ij}
=\frac{(z_{k}m_r-y^{2})(m_r-m_l)}{M_{ij}}.
\end{align*}

\begin{lem}\label{KVij2*}
Assume $m_l<m_r$ and $\sigma_{10}=\overline{\sigma_{20}}=iy$ with $y>0$. 
One has 
\begin{itemize}
\item[(i)]

$\theta_{12}> 0$  if and only if $y\kappa-z_{3}\zeta<0$ and $M_{12}(V-V^*_{12})> 0$;

\item[(ii)]

$\theta_{13}> 0$  if and only if $y\zeta+z_{2}\kappa>0$ and $V<V^*_{13}$;

\item[(iii)]
$\theta_{23}>0$ if and only if $V<V^*_{23}$.
\end{itemize}
\end{lem}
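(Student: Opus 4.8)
The plan is to reduce each of the three equivalences to a single sign condition read off from the explicit formula (\ref{cotD}), and then to pin down the relevant signs case by case. First I would invoke Proposition \ref{thetainV}: since we assume $m_l<m_r$, we are in the regime $V>0$, where $\theta_{ij}$ is strictly decreasing in $V_{ij}$ with $\theta_{ij}=0$ at $V_{ij}=V$, while the convention $VV_{ij}>0$ together with (\ref{cotV}) places $V_{ij}\in(0,\pi/y)$. Hence $\theta_{ij}>0$ is equivalent to $V_{ij}<V$. Because $yV\in(0,\tfrac{\pi}{2})$ and $yV_{ij}\in(0,\pi)$ both lie in $(0,\pi)$, where $\cot$ is strictly decreasing, this is in turn equivalent to $\cot(yV)<\cot(yV_{ij})$, i.e. to $\cot(yV)-\cot(yV_{ij})<0$.

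Feeding this into (\ref{cotD}) and cancelling the positive factors $y^2+m_l^2$, $y$ and $m_r-m_l$ (the last positive since $m_l<m_r$), I obtain the common reduction
\begin{align*}
\theta_{ij}>0\Longleftrightarrow \frac{M_{ij}(V-V^*_{ij})}{y\kappa-z_k\zeta}<0.
\end{align*}
The whole lemma then comes down to fixing the signs of the denominator $y\kappa-z_k\zeta$, of $M_{ij}$, and---where the denominator sign is not forced---of the numerator $y\zeta+z_k\kappa$ of $\cot(yV_{ij})$ in (\ref{cotV}), using the ordering $z_3<0<z_2<m_l<m_r<z_1$ together with (\ref{ykz23}), Lemma \ref{cd1} and Lemma \ref{z3}.

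For (iii), $k=1$: the denominator $y\kappa-z_1\zeta>0$ by (\ref{ykz23}) and $M_{23}=(z_1m_l+y^2)(y^2+m_r^2)>0$ since $z_1,m_l>0$, so the reduction collapses to $V-V^*_{23}<0$, i.e. $V<V^*_{23}$. For (ii), $k=2$: again $y\kappa-z_2\zeta>0$ and $M_{13}=(z_2m_l+y^2)(y^2+m_r^2)>0$, giving $V<V^*_{13}$; to recover the stated factor $y\zeta+z_2\kappa>0$ I would note that $\theta_{13}>0$ forces $V_{13}<V<\tfrac{\pi}{2y}$, hence $\cot(yV_{13})>0$, and since the denominator is positive this means $y\zeta+z_2\kappa>0$ (conversely this factor is automatically implied by $V<V^*_{13}$ through the same placement argument, so it is recorded for emphasis, and Lemma \ref{z3}(ii) identifies it explicitly). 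For (i), $k=3$: here the numerator $y\zeta+z_3\kappa<0$ by (\ref{ykz23}), so $\cot(yV_{12})>0$---which is necessary for $V_{12}<V<\tfrac{\pi}{2y}$---forces the denominator $y\kappa-z_3\zeta<0$; with this sign the reduction becomes $M_{12}(V-V^*_{12})>0$. Since $M_{12}=(z_3m_l+y^2)(y^2+m_r^2)$ is \emph{not} a priori signed ($z_3m_l<0$), the conclusion must stay as this product condition rather than a bare comparison of $V$ with $V^*_{12}$, and Lemma \ref{z3}(i) records exactly when $y\kappa-z_3\zeta<0$ can hold.

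The main obstacle is not the algebra but the careful handling of the branch placement of $V_{ij}$ relative to $\tfrac{\pi}{2y}$, which is precisely what produces the auxiliary sign factors in (i) and (ii). I would make this rigorous by splitting each equivalence into two directions: necessity follows from $\theta_{ij}>0\Rightarrow V_{ij}<V<\tfrac{\pi}{2y}\Rightarrow\cot(yV_{ij})>0$, which fixes the numerator/denominator signs of $\cot(yV_{ij})$; sufficiency follows by inserting the assumed signs directly into (\ref{cotD}) to force $\cot(yV)-\cot(yV_{ij})<0$, hence $V_{ij}<V$ and $\theta_{ij}>0$. Throughout, the inequalities $\kappa>0>\zeta$ and $y\zeta+m_r\kappa>0$ from Lemma \ref{cd1} guarantee the auxiliary bounds (\ref{ykz23}) used above.
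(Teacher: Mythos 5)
Your proposal is correct and follows essentially the same route as the paper's proof: both reduce $\theta_{ij}>0$ to $V_{ij}<V$ via Proposition \ref{thetainV}, then to $\cot(yV)<\cot(yV_{ij})$ using monotonicity of $\cot$ on $(0,\pi)$, and finally read off signs from (\ref{cotD}) together with (\ref{ykz23}) and Lemma \ref{cd1}. The only difference is organizational: you form the unified quotient criterion $M_{ij}(V-V^*_{ij})/(y\kappa-z_k\zeta)<0$ first and then fix signs, whereas the paper branches first on the sign of the numerator/denominator of $\cot(yV_{ij})$; the substance, including the handling of the auxiliary factors in (i) and (ii), is the same.
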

\begin{proof}
It follows from Proposition \ref{thetainV},
$yV\in(0,\frac{\pi}{2})$ and $yV_{ij}\in(0,\pi)$ that
$\theta_{ij}>0$ if and only if $V_{ij}<V$ if and only if $\cot(yV)<\cot(yV_{ij})$.
Note that $\cot(yV)>0$.

(i) Note that $y\zeta+z_{3}\kappa<0$. 
If $y\kappa-z_{3}\zeta>0$, then it follows from (\ref{cotV}) that 
$\cot(yV_{12})<0$, which leads to $\theta_{12}<0$.
If $y\kappa-z_{3}\zeta<0$, then it follows from (\ref{cotD}) that 
$\cot(yV)<\cot(yV_{12})$ if and only if $M_{12}(V-V^*_{12})> 0$.

(ii) Note that $y\kappa-z_{2}\zeta>0$.
If $y\zeta+z_{2}\kappa<0$, then it follows from (\ref{cotV}) that 
$\cot(yV_{13})<0$, which leads to $\theta_{13}<0$.
If $y\zeta+z_{2}\kappa>0$, then it follows from (\ref{cotD}) and $M_{13}>0$ that 
$\cot(yV)<\cot(yV_{13})$ if and only if $V<V^*_{13}$.

(iii) Note that $y\kappa-z_{1}\zeta>0$ and $y\zeta+z_{1}\kappa>0$.
It follows from (\ref{cotD})  and $M_{23}>0$ 
that $\cot(yV)<\cot(yV_{23})$ if and only if $V<V^*_{23}$.
The proof is completed.
\end{proof}

{
\begin{rem}
(i) Note that $y^2+z_3m_r<y^2+z_3m_l$ and
\[V^*_{12}-\frac{(z_3m_l-y^2)(m_r-m_l)}{(z_3m_r+y^2)(y^2+m_l^2)}
=\frac{y^2(y^2+z_3^2)(m_r+m_l)(m_r-m_l)^2}
{(y^2+z_3m_r)(y^2+z_3m_l)(y^{2}+m_r^{2})(y^2+m_l^2)}.\]
By (i) in Lemma \ref{z3} and (i) in Lemma \ref{KVij2*},
it can be shown that $\theta_{12}> 0$ if and only if
either
\[ y^2+z_3m_r<y^2+z_3m_l<0~~\text{and}~~ 
V^*_{12}>V>\frac{(z_3m_l-y^2)(m_r-m_l)}{(z_3m_r+y^2)(y^2+m_l^2)},\]
or 
\[ y^2+z_3m_r<0<y^2+z_3m_l~~\text{and}~~ V>V^*_{12}.\]


(ii) Note that
\[V^*_{13}-\frac{(m_l+z_2)(m_r-m_l)}{(m_r-z_2)(y^2+m_l^2)}
=\frac{(y^{2}+z_2^{2})(y^{2}+m_lm_r)(m_l^{2}-m_r^{2})}
{(m_r-z_2)(y^2+z_2m_l)(y^2+m_r^2)(y^2+m_l^2)}<0.\]
By (ii) in Lemma \ref{z3}, we can rewrite (ii) in Lemma \ref{KVij2*} as
$\theta_{13}> 0$  if and only if $V<V^*_{13}$.
\end{rem}}

The next result gives the order of $\theta_{12}$, $\theta_{13}$ and $\theta_{23}$.

\begin{lem}\label{V13V}
Assume $m_l<m_r$ and $\sigma_{10}=\overline{\sigma_{20}}=iy$ with $y>0$. 
If $y\kappa-z_{3}\zeta>0$, then $\theta_{12}<\theta_{13}<\theta_{23}$;
if $y\kappa-z_{3}\zeta<0$, then $\theta_{13}<\theta_{23}<\theta_{12}$.
In particular, $\theta_{13}<0$.
\end{lem}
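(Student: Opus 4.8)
The plan is to convert the ordering of the $\theta_{ij}$ into an ordering of the $V_{ij}$, and then into an ordering of a single real rational function evaluated at $z_1,z_2,z_3$. Since $V>0$ in the regime $m_l<m_r$, Proposition \ref{thetainV} shows $\theta_{ij}$ is strictly decreasing in $V_{ij}$; moreover $yV_{ij}\in(0,\pi)$, where $\cot$ is strictly decreasing. Hence $\theta_{12}<\theta_{13}<\theta_{23}$ is equivalent to $\cot(yV_{12})<\cot(yV_{13})<\cot(yV_{23})$, and likewise for the reversed chain. By (\ref{cotV}) these cotangents equal $f(z_k)$ with $\{i,j,k\}=\{1,2,3\}$, where
\[ f(z)=\frac{y\zeta+z\kappa}{y\kappa-z\zeta}. \]
Explicitly $\cot(yV_{12})=f(z_3)$, $\cot(yV_{13})=f(z_2)$, $\cot(yV_{23})=f(z_1)$, so everything but the last sentence reduces to ordering $f(z_3),f(z_2),f(z_1)$.

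I would then study $f$ as a M\"obius function of $z$. Differentiating gives $f'(z)=y(\zeta^2+\kappa^2)/(y\kappa-z\zeta)^2>0$, so $f$ is strictly increasing on each interval avoiding its unique pole $z^*=y\kappa/\zeta$; by Lemma \ref{cd1} ($\kappa>0>\zeta$) this pole is negative, and $f$ has horizontal asymptote $f(\pm\infty)=-\kappa/\zeta>0$. Since $\zeta<0$, the hypothesis of the lemma locates $z_3$ relative to the pole: $y\kappa-z_3\zeta>0\iff z^*<z_3$, while $y\kappa-z_3\zeta<0\iff z_3<z^*<0$. Recalling $z_3<0<z_2<z_1$ and, from (\ref{ykz23}), $y\kappa-z_2\zeta>0$ and $y\kappa-z_1\zeta>0$, the points $z_1,z_2$ always lie to the right of the pole.

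The two cases now fall out. If $y\kappa-z_3\zeta>0$, all of $z_3<z_2<z_1$ lie on the single branch $(z^*,\infty)$ where $f$ increases, so $f(z_3)<f(z_2)<f(z_1)$, giving $V_{12}>V_{13}>V_{23}$ and hence $\theta_{12}<\theta_{13}<\theta_{23}$. If $y\kappa-z_3\zeta<0$, the pole separates $z_3$ (left branch, where $f>-\kappa/\zeta$) from $z_2<z_1$ (right branch, where $f$ increases and stays below $-\kappa/\zeta$); thus $f(z_3)>-\kappa/\zeta>f(z_1)>f(z_2)$, giving $V_{12}<V_{23}<V_{13}$ and hence $\theta_{13}<\theta_{23}<\theta_{12}$.

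For $\theta_{13}<0$ the ordering is not enough and a sign must be produced separately. From (\ref{cotV}), $\theta_{ij}=0$ exactly when $V_{ij}=V$, so by monotonicity and $V_{13}>0$, Proposition \ref{thetainV} gives $\theta_{13}<0\iff V_{13}>V$. Applying (\ref{cotD}) with $(i,j,k)=(1,3,2)$ — whose right-hand side has positive denominator since $y\kappa-z_2\zeta>0$ and $M_{13}=(z_2m_l+y^2)(y^2+m_r^2)>0$ — this is equivalent to $V>V^*_{13}$. I would verify the latter by writing $\Psi(V):=V(z_2m_l+y^2)(y^2+m_r^2)-(z_2m_r-y^2)(m_r-m_l)$, an increasing linear function of $V$, and evaluating it at the lower bound $V>V_0$ of Lemma \ref{cd1}; after simplification $\Psi(V_0)$ is a positive multiple of $y^2+m_lm_r-z_2(m_r-m_l)$, which is positive because $0<z_2<m_l<m_r$ forces $z_2(m_r-m_l)<m_l(m_r-m_l)<m_lm_r$. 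The main obstacle is precisely this last step: the ordering is a clean monotonicity-and-pole-location argument, but $\theta_{13}<0$ forces one to quantify $V$, which is available only transcendentally through $\cot(yV)=(y^2+m_lm_r)/(y(m_r-m_l))$; replacing $V$ by the explicit algebraic bound $V_0$ and reducing the whole inequality to $z_2<m_l$ is what makes the sign computable.
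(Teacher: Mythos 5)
Your proof is correct and follows essentially the same route as the paper: the ordering is obtained by comparing the cotangent values $\frac{y\zeta+z_k\kappa}{y\kappa-z_k\zeta}$ (your M\"obius-function/pole analysis is just a repackaging of the paper's pairwise difference formula, whose sign is governed by $(y\kappa-z_i\zeta)(y\kappa-z_j\zeta)$), and $\theta_{13}<0$ is proved by exactly the paper's chain $V>V_{0}>V^{*}_{13}$, your $\Psi(V_0)$ being $M_{13}(V_0-V^{*}_{13})$ and reducing to the same positive quantity $y^{2}+m_lm_r-z_{2}(m_r-m_l)$. The only cosmetic deviation is that you invoke the identity (\ref{cotD}) directly on $(0,\pi)$ and thereby skip the paper's preliminary case split on the sign of $y\zeta+z_{2}\kappa$, which is a harmless streamlining.
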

\begin{proof}
A direct calculation gives
\begin{align*}
\frac{y\zeta+z_{i}\kappa}{y\kappa-z_{i}\zeta}
-\frac{y\zeta+z_{j}\kappa}{y\kappa-z_{j}\zeta}
=\frac{y(z_{i}-z_{j})(\kappa^2+\zeta^2)}{(y\kappa-z_{i}\zeta)(y\kappa-z_{j}\zeta)}.
\end{align*}
Note that $y\kappa-z_{1}\zeta>y\kappa-z_{2}\zeta>0$.
If $y\kappa-z_{3}\zeta>0$,  then it is direct to obtain  
\[\frac{y\zeta+z_{1}\kappa}{y\kappa-z_{1}\zeta}
>\frac{y\zeta+z_{2}\kappa}{y\kappa-z_{2}\zeta}
>\frac{y\zeta+z_{3}\kappa}{y\kappa-z_{3}\zeta},\]
which leads to
$V_{12}>V_{13}>V_{23}$ by (\ref{cotV}),
and hence
$\theta_{12}<\theta_{13}<\theta_{23}$. 
Similarly, one has that if $y\kappa-z_{3}\zeta<0$, 
then $\theta_{13}<\theta_{23}<\theta_{12}$.


Next, we prove $\theta_{13}<0$.
If $y\zeta+z_{2}\kappa<0$, then one has from Lemma \ref{KVij2*} that $\theta_{13}<0$. 
If $y\zeta+z_{2}\kappa>0$, then one has from $y\kappa-z_{2}\zeta>0$ and (\ref{cotV}) that 
$\cot(yV_{13})>0$ and $yV_{13}\in(0,\frac{\pi}{2})$.
Note that
\begin{align*}
V_{0}-V^*_{13}
=\frac{y^{2}(m_r^{2}-m_l^{2})\left((m_l-z_{2})m_r+y^{2}+z_{2}m_l\right)}
{M_{13}m_r(y^{2}+m_l^{2})}.
\end{align*}
It follows from $m_r>m_l>z_{2}>0$, $M_{13}>0$ and Lemma \ref{cd1}
that $V>V_{0}>V^*_{13}$,
which leads to $\theta_{13}<0$ by Lemma \ref{KVij2*}.
The proof is completed.
\end{proof}

{
For signs of $\tau_{12}$, $\tau_{13}$ and $\tau_{23}$, we now establish the following results.
\begin{thm}\label{ThmM}
Assume $m_l<m_r$ and $\sigma_{10}=\overline{\sigma_{20}}=iy$ with $y>0$. 
One has $\tau_{13}<0$ for any $(\alpha_1,\alpha_2)\in\Delta$.
Furthermore, 
\begin{itemize}
\item[(i)]
if $V\ge V^*_{23}$, then $\tau_{23}<0$ for any $(\alpha_1,\alpha_2)\in\Delta$;

if $V<V^*_{23}$, then 
$\tau_{23}>0$ for $\alpha_1<\theta_{23}$ and $\alpha_2<P_{23}(\alpha_1)$;
$\tau_{23}<0$ for $\alpha_1>\theta_{23}$ or $\alpha_2>P_{23}(\alpha_1)$;

\item[(ii)]
if $y\kappa-z_{3}\zeta\ge0$, then $\tau_{12}<0$ for any $(\alpha_1,\alpha_2)\in\Delta$;

if $y\kappa-z_{3}\zeta<0$ and $M_{12}(V-V^*_{12})<0$, then $\tau_{12}>0$ for any $(\alpha_1,\alpha_2)\in\Delta$;

if $y\kappa-z_{3}\zeta<0$ and $M_{12}(V-V^*_{12})>0$, then 
$\tau_{12}<0$ for $\alpha_1<\theta_{12}$ and $\alpha_2<P_{12}(\alpha_1)$;
$\tau_{12}>0$ for $\alpha_1>\theta_{12}$ or $\alpha_2>P_{12}(\alpha_1)$.
\end{itemize}
\end{thm}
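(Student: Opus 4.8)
The plan is to reduce every claim to the monotonicity of the single--variable function $T_{ij}(\phi)$ along the segment swept by the interior potentials. First I would fix the sign $V>0$: by Lemma \ref{veta} the hypothesis $m_l<m_r$ forces $V>0$, and by the discussion preceding Remark \ref{zeroV} we then have $0<\phi_0^{[2]}<\phi_0^{[1]}<V$ whenever $\alpha_1<\alpha_2$, while $\phi_0^{[1]}=\phi_0^{[2]}$ (so $\tau_{ij}=0$) on the degenerate diagonal $\alpha_1=\alpha_2$. Since $yV\in(0,\pi/2)$ in this special case, the whole segment $(0,V)$ sits inside the domain $(-\pi/|y|,\pi/|y|)$ of $T_{ij}$. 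Because $\tau_{ij}=T_{ij}(\phi_0^{[2]})-T_{ij}(\phi_0^{[1]})$, the sign of $\tau_{ij}$ for $\alpha_1<\alpha_2$ is governed entirely by whether $T_{ij}$ increases or decreases across $(0,V)$.

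The core of the argument is a case split according to the location of the critical point $V_{ij}$. When $\theta_{ij}\in(0,1)$ -- equivalently, by Proposition \ref{thetainV}, when $V_{ij}$ lies strictly between $0$ and $V$ -- the critical point falls inside the sweep interval, and I would simply invoke Proposition \ref{KVij2}(ii), reading off case (a) or (b) from the sign of $(z_i-z_j)V(y\kappa-z_k\zeta)$. The configurations left untouched by Proposition \ref{KVij2} are exactly those with $\theta_{ij}\le 0$, i.e. $V_{ij}\ge V$ (this equivalence follows from the monotone dependence of $\theta_{ij}$ on $V_{ij}$ in Proposition \ref{thetainV} together with $\theta_{ij}=0$ at $V_{ij}=V$). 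In that regime both critical points satisfy $\tilde V_{ij}<0<V\le V_{ij}$, so $T_{ij}$ is strictly monotone on $(0,V)$ and its direction is fixed by the sign of
\[T'_{ij}(0)=\frac{2(z_i-z_j)(y\kappa-z_k\zeta)}{\zeta^2+\kappa^2},\]
which is obtained by evaluating at $\phi=0$ the formula for $T'_{ij}(\phi)$ established earlier. Hence $T'_{ij}(0)>0$ yields $\tau_{ij}<0$ and $T'_{ij}(0)<0$ yields $\tau_{ij}>0$; the boundary case $y\kappa-z_k\zeta=0$ is instead dispatched by Proposition \ref{KVij2}(i) together with $\zeta<0$ from Lemma \ref{cd1}. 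I expect this supplementary monotonicity argument for $\theta_{ij}\le 0$ to be the main obstacle, since it covers precisely the regime that the general Proposition \ref{KVij2} does not reach and it hinges on correctly placing $V_{ij}$ relative to the interval $(0,V)$.

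With this machinery each flux is settled by substituting the already--established sign facts. For $\tau_{13}$ I take $k=2$: Lemma \ref{V13V} gives $\theta_{13}<0$ and (\ref{ykz23}) gives $y\kappa-z_2\zeta>0$, so $T_{13}$ increases on $(0,V)$ and $\tau_{13}<0$ throughout $\Delta$. For $\tau_{23}$ I take $k=1$, where $y\kappa-z_1\zeta>0$ by (\ref{ykz23}): if $V\ge V^*_{23}$ then $\theta_{23}\le 0$ by Lemma \ref{KVij2*}(iii) and the increasing branch gives $\tau_{23}<0$, while if $V<V^*_{23}$ then $\theta_{23}\in(0,1)$ and Proposition \ref{KVij2}(ii) applies with $(z_2-z_3)V(y\kappa-z_1\zeta)>0$, i.e. case (b). For $\tau_{12}$ I take $k=3$: when $y\kappa-z_3\zeta\ge 0$, Lemma \ref{KVij2*}(i) forces $\theta_{12}\le 0$ and the increasing branch gives $\tau_{12}<0$; when $y\kappa-z_3\zeta<0$ the sign of $M_{12}(V-V^*_{12})$ decides, with $M_{12}(V-V^*_{12})<0$ giving $\theta_{12}\le 0$ and the decreasing branch ($\tau_{12}>0$), and $M_{12}(V-V^*_{12})>0$ giving $\theta_{12}\in(0,1)$ so that Proposition \ref{KVij2}(ii) applies with $(z_1-z_2)V(y\kappa-z_3\zeta)<0$, i.e. case (a). In each monotone branch I would record that the strict signs hold for $\alpha_1<\alpha_2$, while $\tau_{ij}=0$ on the diagonal, matching the statement once $\Delta$ is read with a nondegenerate middle region.
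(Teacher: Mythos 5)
Your proposal is correct, and its overall route is the paper's: both arguments reduce each sign claim to Proposition \ref{KVij2} together with the sign facts in (\ref{ykz23}), Lemma \ref{cd1}, Lemma \ref{KVij2*} and Lemma \ref{V13V}, with the same choice of $k$ for each pair and the same identification of case (a) versus (b). The genuine difference is how the regime $\theta_{ij}\le 0$ is treated. The paper's proof simply cites item (ii) of Proposition \ref{KVij2} there (e.g.\ ``$\theta_{13}<0$, \dots\ it follows from Theorem \ref{KVij2} that $\tau_{13}<0$ on $\Delta$''), even though that item is stated only under the hypothesis $\theta_{ij}\in[0,1]$; the intended reading is that the region $\{\alpha_1<\theta_{ij},\,\alpha_2<P_{ij}(\alpha_1)\}$ is empty, so the ``$\alpha_1>\theta_{ij}$'' sign covers all of $\Delta$, but as written the citation is outside the proposition's scope. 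You instead supply a self-contained argument for exactly this regime: by Proposition \ref{thetainV} (monotonicity of $\theta_{ij}$ in $V_{ij}$ and $\theta_{ij}=0$ at $V_{ij}=V$), $\theta_{ij}\le0$ is equivalent to $V_{ij}\ge V$, so both critical points satisfy $\tilde V_{ij}<0<V\le V_{ij}$ and lie outside the interval $[0,V]$ swept by $\phi_0^{[2]}<\phi_0^{[1]}$; hence $T_{ij}$ is strictly monotone there and the sign of $\tau_{ij}=T_{ij}(\phi_0^{[2]})-T_{ij}(\phi_0^{[1]})$ is read off from
\[
T'_{ij}(0)=\frac{2(z_i-z_j)(y\kappa-z_k\zeta)}{\zeta^{2}+\kappa^{2}},
\]
with the boundary case $y\kappa-z_k\zeta=0$ dispatched by Proposition \ref{KVij2}(i) and $\zeta<0$. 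This buys rigor: the uniform-sign conclusions ($\tau_{13}<0$ everywhere; $\tau_{23}<0$ when $V\ge V^*_{23}$; $\tau_{12}<0$ when $y\kappa-z_3\zeta\ge0$; $\tau_{12}>0$ when $y\kappa-z_3\zeta<0$ and $M_{12}(V-V^*_{12})<0$) are proved rather than extrapolated from a proposition whose hypothesis fails. Your closing remark that the strict signs hold for $\alpha_1<\alpha_2$ while $\tau_{ij}=0$ on the diagonal is also the correct reading of the theorem's quantifier over $\Delta$, a point the paper leaves implicit.
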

\begin{proof}
By (\ref{ykz23}) and Lemma \ref{V13V}, one has
$V(y\kappa-z_{2}\zeta)>0$ and $\theta_{13}<0$.
It follows from Theorem \ref{KVij2} that $\tau_{13}<0$ for any $(\alpha_1,\alpha_2)\in\Delta$.

Note that $V(y\kappa-z_{2}\zeta)>0$.
If $V\ge V^*_{23}$, then one has from Lemma \ref{KVij2*} that $\theta_{23}<0$.
It then follows from (ii) in Theorem \ref{KVij2} that $\tau_{23}<0$.
If $V<V^*_{23}$, then one has from Lemma \ref{KVij2*} that $\theta_{23}>0$.
It then follows from (ii) in Theorem \ref{KVij2} that 
$\tau_{23}>0$ for $\alpha_1<\theta_{23}$ and $\alpha_2<P_{23}(\alpha_1)$;
$\tau_{23}<0$ for $\alpha_1>\theta_{23}$ or $\alpha_2>P_{23}(\alpha_1)$.
The other cases can be obtained similarly.
\end{proof}

It is obvious that $\tau_{12}$ (resp. $\tau_{23}$) is continuous on $\Delta$
and the sign of $\tau_{12}$ (resp. $\tau_{23}$) changes at 
$(\alpha_1,P_{12}(\alpha_1))$ (resp. $(\alpha_1,P_{23}(\alpha_1))$).
Note that graphs of $P_{ij}$'s do not intersect (Proposition 3.10 in \cite{SL22}).

\begin{cor}
Assume $m_l<m_r$ and $\sigma_{10}=\overline{\sigma_{20}}=iy$ with $y>0$. 
If $V<V^*_{23}$, $y\kappa-z_{3}\zeta<0$ and $M_{12}(V-V^*_{12})>0$, then
\begin{itemize}
\item[(i)]
$\tau_{12}<\tau_{13}<0<\tau_{23}$ for $\alpha_1<\theta_{23}$ and $\alpha_2<P_{23}(\alpha_1)$;

\item[(ii)]
$\tau_{13}<\tau_{12}(\tau_{23})<0$ for
$P_{23}(\alpha_1)<\alpha_2<P_{12}(\alpha_1)$ or $\theta_{23}<\alpha_1<\alpha_2<P_{12}(\alpha_1)$;

\item[(iii)]
$\tau_{23}<\tau_{13}<0<\tau_{12}$ for $\alpha_1>\theta_{12}$ or $\alpha_2>P_{12}(\alpha_1)$.
\end{itemize}
\end{cor}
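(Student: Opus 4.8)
The plan is to establish the corollary by combining the sign determinations of $\tau_{12}$, $\tau_{13}$, $\tau_{23}$ already provided in Theorem \ref{ThmM} with the ordering information on the threshold quantities $\theta_{ij}$ coming from Lemma \ref{V13V}. The three cases (i)--(iii) are a partition of $\Delta$ according to where $(\alpha_1,\alpha_2)$ sits relative to the two transition curves $\alpha_2=P_{12}(\alpha_1)$ and $\alpha_2=P_{23}(\alpha_1)$; the heart of the argument is to track how the individual signs of each $\tau_{ij}$ change across these curves and to insert the strict inequalities between them.

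First I would record, under the three standing hypotheses $V<V^*_{23}$, $y\kappa-z_3\zeta<0$, and $M_{12}(V-V^*_{12})>0$, what Theorem \ref{ThmM} gives directly. The hypothesis $y\kappa-z_3\zeta<0$ places us in the second clause of Lemma \ref{V13V}, so $\theta_{13}<\theta_{23}<\theta_{12}$, and moreover $\theta_{13}<0$ always. Theorem \ref{ThmM} then yields: $\tau_{13}<0$ on all of $\Delta$; $\tau_{23}>0$ precisely when $\alpha_1<\theta_{23}$ and $\alpha_2<P_{23}(\alpha_1)$ (negative otherwise); and $\tau_{12}<0$ precisely when $\alpha_1<\theta_{12}$ and $\alpha_2<P_{12}(\alpha_1)$ (positive otherwise). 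With these in hand, the sign of each of the three quantities is pinned down on every subregion of $\Delta$, so each of (i)--(iii) reduces to verifying two things: that the signs match and that the claimed strict orderings hold.

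For part (i), in the region $\alpha_1<\theta_{23}$, $\alpha_2<P_{23}(\alpha_1)$, one has $\tau_{23}>0$ while both $\tau_{12}<0$ and $\tau_{13}<0$; I would then need $\tau_{12}<\tau_{13}$, which follows from $\tau_{12}-\tau_{13}=\tau_{23}\cdot(\text{appropriate factor})$ — more carefully, from $\tau_{ij}=\tau_i-\tau_j$ in (\ref{1stLambdas}) one has the identity $\tau_{12}-\tau_{13}=-\tau_{23}$, so $\tau_{23}>0$ immediately gives $\tau_{12}<\tau_{13}$. This linear relation $\tau_{12}+\tau_{23}=\tau_{13}$ is the key algebraic tool and I would invoke it repeatedly. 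For part (iii), in the region $\alpha_1>\theta_{12}$ or $\alpha_2>P_{12}(\alpha_1)$, Theorem \ref{ThmM} gives $\tau_{12}>0$ and $\tau_{23}<0$ (here $\alpha_2>P_{12}(\alpha_1)\ge P_{23}(\alpha_1)$ since the graphs are ordered and nonintersecting by Proposition 3.10 of \cite{SL22}), so again $\tau_{23}<\tau_{13}<0<\tau_{12}$ drops out of $\tau_{13}=\tau_{12}+\tau_{23}$ together with $\tau_{13}<0$.

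The main obstacle will be part (ii), the intermediate region, where I must show $\tau_{13}<\tau_{12}<0$ and $\tau_{13}<\tau_{23}<0$ simultaneously (the notation $\tau_{12}(\tau_{23})$ signals both hold). Here $\tau_{23}<0$ (we are past the $P_{23}$ curve) and $\tau_{12}<0$ (we are before the $P_{12}$ curve), so the sign match is clear; the work is to confirm the strict separation from $\tau_{13}$. Since $\tau_{13}=\tau_{12}+\tau_{23}$ with both summands negative, we get $\tau_{13}<\tau_{12}$ and $\tau_{13}<\tau_{23}$ at once — so in fact the linear identity again does the job, provided one is careful that the region described by the two stated alternatives ($P_{23}(\alpha_1)<\alpha_2<P_{12}(\alpha_1)$, or $\theta_{23}<\alpha_1<\alpha_2<P_{12}(\alpha_1)$) really does force $\tau_{23}<0<$ nothing and $\tau_{12}<0$. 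The delicate point is the geometry of $\Delta$ near the nested thresholds $\theta_{13}<\theta_{23}<\theta_{12}$ and the curves $P_{23}$, $P_{12}$: I would verify that these two curves, together with the diagonal $\alpha_1=\alpha_2$ (on which all $\tau_{ij}$ vanish), partition $\Delta$ exactly into the three stated regions, using the monotonicity of $P_{ij}$ and their non-intersection. Once the partition is confirmed, the three conclusions follow from the sign data of Theorem \ref{ThmM} and the identity $\tau_{13}=\tau_{12}+\tau_{23}$, with no further computation needed.
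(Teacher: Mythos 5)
Your proposal is correct and takes essentially the same route as the paper's proof: the paper likewise deduces $\theta_{12}>\theta_{23}>0$ from Lemmas \ref{KVij2*} and \ref{V13V}, invokes the sign determinations of Theorem \ref{ThmM} on each region, and closes with the identity $\tau_{13}=\tau_{12}+\tau_{23}<0$. Your write-up simply spells out the region-by-region bookkeeping (including the curve ordering $P_{23}<P_{12}$) that the paper compresses into ``the conclusion then follows from Theorem \ref{ThmM} directly.''
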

\begin{proof}
For $V<V^*_{23}$, $y\kappa-z_{3}\zeta<0$ and $M_{12}(V-V^*_{12})>0$,
one has from Lemma \ref{KVij2*} and Lemma \ref{V13V} that $\theta_{12}>\theta_{23}>0$.
Note that $\tau_{13}=\tau_{12}+\tau_{23}<0$.
The conclusion then follows from Theorem \ref{ThmM} directly.
\end{proof}}

\section{Conclusion}\label{summary} In this work, for ionic flow involving three ion species (two cations with different valences and one anion),  we continued the study in \cite{SL22} on effects of permanent charges on ionic fluxes. 

For the new case examined here, our results are consistent with some found in \cite{SL22}. For the case of pure imaginary eigenvalues, one always has $\tau_{13}<0$ though. 
Otherwise, the results in the new case study continue to support our conjectures raised in Section 5 of \cite{SL22}; that is, we believe that the answers to the following questions are negative:
\medskip

{\em Question 1. Can $\tau_{13}>0$ and $\tau_{23}>0$ occur simultaneously?}

{\em Question 2. Can $\tau_{12}>0$ and $\tau_{23}>0$ occur simultaneously?}

\bigskip
 It is a common view that, if an ionic mixture consists of only one cation and one anion ion species, the ionic mixture behaves close to the salt in the sense that the concentrations of the cation and anion are tightly related due to electroneutrality nearly everywhere. On the other hand, with an additional ion species (another cation in this work), there is an extra freedom --- the specifics (such as location characters $\alpha_1$ and $\alpha_2$) of permanent charges could coordinate two cation species using their boundary conditions to create behavors far more richer than what one may naively guessed. (This was indeed happened to the authors several times when \cite{SL22} and this paper were prepared.)  How much does this extra freedom allow one to do? The two case studies in \cite{SL22} and this paper have shown a number of generally unexpected phenomena. 
The two questions raised in \cite{SL22} and reiterated above concern  limitations of this extra freedom, just like the universality for the case of two ion species in (\ref{universal}).  We hope in the near future, answers to these questions could be available. We believe that our case studies, by no means trivial,  could stimulate further investigation on this problems and other extensions. 

 We remark that, if the two cation species have the same valence, then their flux ratios are the same. This result, as a special case of a general result, will appear in a forthcoming paper.

\bigskip


  \noindent
 {\bf Acknowledgment.} The authors thank the reviewers for their comments and suggestions that improve the manuscript. Ning Sun was partially supported by the Joint Ph.D. Training Program sponsored by the China Scholarship Council and a Graduate Innovation Fund of Jilin University.  
 Weishi Liu was partially supported by Simons Foundation Mathematics and Physical Sciences-Collaboration Grants for Mathematicians 581822.

\end{document}